\newcommand{\abs}[1]{\lvert#1\rvert}
\newcommand{\norm}[1]{\lVert#1\rVert}
\newcommand{\dom}{\operatorname{dom}}
\newcommand{\tr}{\operatorname{tr}}
\renewcommand{\Re}{\operatorname{Re}}
\renewcommand{\Im}{\operatorname{Im}}
\newtheorem{theorem}{Theorem}[section]
\newtheorem{proposition}[theorem]{Proposition}
\newtheorem{lemma}[theorem]{Lemma}
\newtheorem{corollary}[theorem]{Corollary}
\newtheorem*{conjecture}{Conjecture}
\theoremstyle{remark}
\newtheorem{remark}[theorem]{Remark}
\theoremstyle{definition}
\newtheorem{definition}[theorem]{Definition}
\title{Triangle Inequality for a Quantum Wasserstein Divergence}
\author{Melchior Wirth}
\address{Institute of Mathematics, Faculty of Mathematics and Computer Science, Leipzig University, Neues Augusteum, Augustusplatz 10, 04109 Leipzig, Germany}
\email{melchior.wirth@uni-leipzig.de}
\date{\today}
\begin{document}

\begin{abstract}
    We resolve a conjecture of De Palma and Trevisan by proving the triangle inequality for a quantum $2$-Wasserstein distance. The proof relies on complex analysis methods to establish a new integral representation of the cost in the optimal transport problem.
\end{abstract}

\maketitle

\section{Introduction}

Following the success of the theory of optimal transport in both pure and applied fields of mathematics in recent decades, significant research effort has been spent on developing a theory of noncommutative or quantum optimal transport and in particular on finding noncommutative analogs of the Wasserstein distances (see for example \cite{MRTV24} and \cite{Bea25} for some recent overviews).

Most approaches to define quantum Wasserstein distances of order $2$ can be classed in one of the following two groups. The first are static quantum Wasserstein distances such as the free Wasserstein distance of Biane and Voiculescu \cite{BV01} and the quantum Wasserstein distances of Ning, Georgiou and Tannenbaum \cite{NGT15}, Caglioti, Golse,
Mouhot, and Paul \cite{GMP16,CGP23}, De Palma and Trevisan \cite{DPT21,DPT24}, Friedland, Eckstein, Cole and \.Zyczkowski \cite{FECZ22} and Duvenhage \cite{Duv22,Duv23}. They define distances by a noncommutative analog of the Monge--Kantorovich formula
\begin{equation*}
    W_2(\mu,\nu)^2=\inf_\pi\int_{\mathbb R^d\times \mathbb R^d}\abs{x-y}^2\,d\pi(x,y),
\end{equation*}
where $\mu$, $\nu$ are Borel probability measures on $\mathbb R^d$ and the infimum runs over all Borel probability measures $\mu$ on $\mathbb R^d\times \mathbb R^d$ with marginals $\mu$ and $\nu$, called transport plans.

The second class are dynamical quantum Wasserstein distances such as the ones introduced by Carlen and Maas \cite{CM14,CM17}, Mielke and Mittnenzweig \cite{MM17} and Chen, Georgiou and Tannenbaum \cite{CGT18} for finite-dimensional quantum systems and extended to certain infinite-dimensional quantum systems by Hornshaw \cite{Hor18} and the author \cite{Wir21}. These distances are built on the model of the Benamou--Brenier formula
\begin{equation*}
    W_2(\mu,\nu)^2=\inf_{(\rho,v)}\int_0^1 \int_{\mathbb R^d}\abs{v_t}^2\,d\rho_t\,dt,
\end{equation*}
where $\mu$ and $\nu$ are again Borel probability measures on $\mathbb R^d$ and the infimum is taken over all pairs $\rho,v$ consisting of a curve $\rho$ of Borel probability measures and a curve $v$ of vector fields on $\mathbb R^d$ (subject to regularity conditions) that satisfy the continuity equation
\begin{equation*}
    \dot\rho_t+\nabla\cdot(v_t \rho_t)=0.
\end{equation*}

In quantum optimal transport, probability measures are replaced by density operators (or more generally states on a $C^\ast$-algebra or normal states on a von Neumann algebra). The transport plans in the Monge--Kantorovich formula are replaced tensor product states, quantum channels or bimodules, and for many distances, these three models are equivalent. Most static quantum Wasserstein distances can be classified according to the inner product in which the quadratic cost is expressed (GNS or KMS inner product) and the class of admissible transport plans.

In particular, the original quadratic Wasserstein distance defined by De Palma and Trevisan \cite{DPT21} can be expressed as
\begin{equation*}
    D(\rho,\sigma)=\inf_{\Phi}\sum_{k=1}^d \norm{x_k}_{\rho,\mathrm{GNS}}^2+\norm{x_k}_{\sigma,\mathrm{GNS}}^2-2\langle \Phi(x_k),x_k\rangle_{\sigma,\mathrm{KMS}},
\end{equation*}
where $x_1,\dots,x_d$ is a given set of self-adjoint operators and the infimum is taken over all duals of quantum channels such that $\Phi_\ast(\sigma)=\rho$.

One drawback of this definition is the distance $D$ has non-zero self-distances, which is a result of mixing the GNS and KMS inner product in the expression of the cost. To obtain an actual transport metric, De Palma and Trevisan proposed the following modified distance \cite{DPT24}:
\begin{equation*}
    W_{2,\mathrm{DPT}}(\rho,\sigma)=\sqrt{D(\rho,\sigma)^2-\frac 1 2D(\rho,\rho)^2-\frac 1 2 D(\sigma,\sigma)^2}.
\end{equation*}
We follow Bunth, Pitrik, Titkos and Virosztek in calling $W_{2,\mathrm{DPT}}$ the quantum Wasserstein divergence. This distance has zero self-distances by definition and is still symmetric, while non-degeneracy depends on the set of cost operators $x_1,\dots,x_d$. However, the question whether the distance $W_{2,\mathrm{DPT}}$ satisfies the triangle inequality was left open by De Palma and Trevisan.

\begin{conjecture}[De Palma, Trevisan {\cite[Remark 5.6]{DPT24}}]
    The map $W_{2,\mathrm{DPT}}$ satisfies the triangle inequality.
\end{conjecture}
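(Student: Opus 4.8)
My plan is to strip away the Gaussian (GNS) energy terms first, and then reduce the triangle inequality to a gluing argument for an \emph{honest} quadratic transport cost, with the quadratic structure supplied by the integral representation announced in the abstract. Writing $S(\rho,\sigma)=\sup_\Phi\sum_{k=1}^d\langle\Phi(x_k),x_k\rangle_{\sigma,\mathrm{KMS}}$ for the supremum over duals $\Phi$ of quantum channels with $\Phi_\ast(\sigma)=\rho$, and expanding the definition of $D$, I expect the identity
\begin{equation*}
    W_{2,\mathrm{DPT}}(\rho,\sigma)^2=S(\rho,\rho)+S(\sigma,\sigma)-2S(\rho,\sigma),
\end{equation*}
because the GNS energies $\sum_k\tr(\rho x_k^2)$ and $\sum_k\tr(\sigma x_k^2)$ cancel against the corresponding self-distance contributions. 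This is the exact noncommutative analogue of the classical covariance identity, and in the commutative case $S$ reduces to the maximal-correlation functional, so that one recovers the ordinary $W_2$. The first thing to keep in mind is that, \emph{even classically}, the map $\rho\mapsto$ (its configuration) does not embed $(W_2,d\ge 2)$ isometrically into a Hilbert space, so $S$ is \emph{not} a positive-definite kernel and the triangle inequality cannot come from a Hilbert embedding: it must come from gluing, exactly as in the Kantorovich proof. The genuine difficulty is therefore that the per-coupling cost appearing in $D(\rho,\sigma)^2$ mixes the GNS energy $\tr(\rho^{1/2}x_k\rho^{1/2}x_k)$ on one side with the KMS overlap $\tr(\sigma^{1/2}\Phi(x_k)\sigma^{1/2}x_k)$ on the other, and these live in two \emph{different} modular structures, so the integrand is not visibly a squared metric and cannot be glued.

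This is precisely where I would bring in the integral representation. The aim is to place the self-energies and the cross overlap on equal footing by representing the square-root (geometric-mean) weight through a contour integral of resolvents $(\rho+z)^{-1}$ and $(\sigma+z)^{-1}$, using that the modular flow extends to a bounded holomorphic function on the strip $0<\Im z<1$ whose boundary values are controlled by a Cauchy--Poisson integral (Phragmén--Lindelöf). Concretely, I would seek a positive measure $\mu$ on a contour $\Gamma$ and, for each admissible $\Phi$, a family of Hilbert--Schmidt-valued discrepancy vectors $V_z(\Phi;x_k)$ — each measuring how far $x_k$ transported through $\Phi$ is from $x_k$ itself, weighted by the spectral parameter $z$ — such that
\begin{equation*}
    W_{2,\mathrm{DPT}}(\rho,\sigma)^2=\inf_{\Phi:\,\Phi_\ast(\sigma)=\rho}\ \int_\Gamma\sum_{k=1}^d\norm{V_z(\Phi;x_k)}_{\mathrm{HS}}^2\,d\mu(z).
\end{equation*}
The right-hand side is then a squared distance in the Hilbert space $L^2(\mu;\mathrm{HS})^{\oplus d}$, and at $\Phi=\mathrm{id}$ it should specialize to the clean expression $\sum_k\norm{[\rho^{1/2},x_k]}_{\mathrm{HS}}^2$ that one computes for the self-distance $D(\rho,\rho)^2$, which is a useful consistency check.

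Granting such a representation, the triangle inequality follows by gluing: given optimizers $\Phi$ for $(\rho,\sigma)$ and $\Psi$ for $(\sigma,\tau)$, the composite $\Phi\circ\Psi$ is a dual channel with $(\Phi\circ\Psi)_\ast(\tau)=\rho$, and I would establish the additivity
\begin{equation*}
    V_z(\Phi\circ\Psi;x_k)=V_z(\Phi;\Psi(x_k))+T_z\,V_z(\Psi;x_k),
\end{equation*}
where $T_z$ is a Hilbert--Schmidt contraction implementing the transport of the middle system; Minkowski's inequality in $L^2(\mu;\mathrm{HS})^{\oplus d}$ then gives $W_{2,\mathrm{DPT}}(\rho,\tau)\le W_{2,\mathrm{DPT}}(\rho,\sigma)+W_{2,\mathrm{DPT}}(\sigma,\tau)$ after taking the infimum, with attainment of the infima secured by weak-$\ast$ compactness of the set of channels. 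The main obstacle will be the integral representation itself: proving that the KMS overlap, after the cancellation above, really is the $z$-integral of squared discrepancies requires the complex-analytic reconciliation of the $\rho$- and $\sigma$-modular groups across the interpolation strip, and — most delicately — the \emph{positivity} of the representing measure $\mu$, since it is exactly this positivity that turns the integrand into a genuine square and thereby makes the cost gluable. Verifying the composition identity for $V_z$ (so that Minkowski applies) is the second, more technical, point to be checked.
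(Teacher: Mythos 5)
Your algebraic reduction $W_{2,\mathrm{DPT}}(\rho,\sigma)^2=S(\rho,\rho)+S(\sigma,\sigma)-2S(\rho,\sigma)$ is correct, and your overall architecture --- write the per-channel cost as $\int\norm{\,\cdot\,}_{\mathrm{HS}}^2\,d\mu$ against a positive measure produced by complex analysis on the modular strip, then glue by composing channels and applying Minkowski in $L^2(\mu;\mathrm{HS})$ --- is exactly the architecture of the paper. The problem is that your proposal stops precisely where the proof has to start: the two load-bearing steps are things you ``would seek'' and ``would establish,'' and you yourself flag both as open. They are not routine verifications; they are the entire content of the argument. What is missing concretely: (i) the construction of the discrepancy vectors. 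The paper builds them from a Kraus family $\bm v=(v_j)$ of $\Phi$: an isometry $L_{\bm v}\colon y\mapsto (v_jy)_j$ and a partial isometry $R_{\bm v}$ determined by $R_{\bm v}(x\rho^{1/2})=(xv_j\sigma^{1/2})_j$; the map $R_{\bm v}$ is the key device, because the naive expression $\rho^{1/4+it}x\rho^{-1/4-it}v_j\sigma^{1/2}$ involves the unbounded operator $\rho^{-1/4-it}$ and must be regularized in exactly this way. The discrepancy is then $L_{\bm v}(\sigma^{1/4+it}y\sigma^{1/4-it})-R_{\bm v}(\rho^{1/4+it}x\rho^{1/4-it})$. (ii) The analytic step: the function $F(z)=\langle L_{\bm v}(\sigma^{1/2-i\bar z}y\sigma^{i\bar z}),R_{\bm v}(\rho^{-iz}x\rho^{1/2+iz})\rangle_2$ is bounded, continuous on the strip, analytic inside, and \emph{real on the real axis}, so Schwarz reflection applies; the residue theorem applied to $g(z)=f(z)/\sin 2\pi z$ then yields the cross term as an integral of boundary values against the explicit probability density $2(\cosh 2\pi t)^{-1}$. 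In this route the positivity of $\mu$, which you single out as the delicate point, is automatic; the real difficulty you should have isolated is the reality of $F$ on $\mathbb R$ and the correct pairing of exponents that makes the boundary values match the squared-discrepancy expansion.

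The gluing step you sketch is also not the right one. Your proposed identity $V_z(\Phi\circ\Psi;x_k)=V_z(\Phi;\Psi(x_k))+T_z\,V_z(\Psi;x_k)$ transports the \emph{observable} through $\Psi$, but $\Psi(x_k)$ has no useful relation to the cost operator on the middle system; the paper instead inserts the middle state's own embedded observable $\rho_2^{1/4+it}x_2\rho_2^{1/4-it}$, and the telescoping works because of three structural identities for Kraus families, $L_{\bm v\otimes\bm w}=(\bigoplus_jL_{\bm v})\circ L_{\bm w}$, $R_{\bm v\otimes\bm w}=(\bigoplus_iR_{\bm w})\circ R_{\bm v}$, and $(\bigoplus_jL_{\bm v})\circ R_{\bm w}=(\bigoplus_iR_{\bm w})\circ L_{\bm v}$, combined with $L_{\bm v}$ being an isometry and $R_{\bm w}$ a contraction. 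Two smaller corrections: your consistency check is off, since the representation must reproduce the \emph{modified} (KMS) cost, whose value at $\Phi=\mathrm{id}$, $\rho=\sigma$ is $0$, whereas $\sum_k\norm{[\rho^{1/2},x_k]}_2^2$ is the unmodified $D$-cost at the identity, which your own reduction has already cancelled; and you do not need attainment of the infima via weak-$\ast$ compactness --- near-optimal channels and subadditivity of the cost give the triangle inequality directly.
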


Major progress on this conjecture was made by Bunth, Pitrik, Titkos and Virosztek \cite{BPTV24,BPTV25b,BPTV25}, who established the triangle inequality in the case when at least on the states involved is pure and for certain qubit states when the cost operators are given by the Pauli Matrices. They also presented numerical evidence that the triangle inequality holds for all states.

In this article, we show the triangle inequality for $W_{2,\mathrm{DPT}}$ in full generality without any restriction on the states (\Cref{cor:triangle_bounded}). Our proof relies on a new integral representation of the cost (\Cref{thm:integral_rep_cost}), which we establish using complex analysis methods.

For technical clarity, we first consider the case of bounded cost operators. However, unbounded cost operators are very relevant for applications in physics. For example, De Palma and Trevisan considered the position and momentum operators on $L^2(\mathbb R^d)$ as cost operators for the optimal transport between Gaussian states. The generalization of our results to the case of unbounded cost operators is straightforward and will be done in the last section.

\subsection*{Acknowledgments} The author is grateful to Daniel Virosztek for introducing him to the problem and sharing early drafts of \cite{BPTV25} with him. He also wants to thank Matthijs Vernooij for fruitful discussions on the closely related problem of representing completely Dirichlet forms in terms of derivations, which motivated this work.

\subsection*{Notation} If $H, K$ are Hilbert spaces, we write $\mathbb B(H;K)$ for the space of all bounded linear operators and $S^p(H;K)$ for the Schatten $p$-class. If $H=K$, we write $\mathbb B(H)$ and $S^p(H)$ instead of $\mathbb B(H;H)$ and $S^p(H;H)$. We call a linear map from $\mathbb B(H)$ to $\mathbb B(K)$ ucp if it is unital and completely positive.

\section{Triangle inequality for bounded cost operators}

In this section, we prove a new integral formula for the cost in the Wasserstein distance of De Palma and Trevisan (\Cref{thm:integral_rep_cost}), which gives rise to the subadditivity of the cost under compositions of quantum channels (\Cref{thm:subadditivity_cost}) and the triangle inequality (\Cref{cor:triangle_bounded}).

We start with a well-known characterization of normal ucp maps \cite[Theorem 9.2.3]{Dav76}.

\begin{proposition}[Kraus]
    Let $H$, $K$ be Hilbert spaces. A linear map $\Phi\colon \mathbb B(H)\to \mathbb B(K)$ is a normal ucp map if and only if there exists a family $(v_j)_{j\in J}$ in $\mathbb B(K;H)$ such that $\sum_{j\in J}v_j^\ast v_j=1$ in the strong operator topology and
    \begin{equation*}
        \Phi(x)=\sum_{j\in J}v_j^\ast x v_j
    \end{equation*}
    in the strong operator topology for all $x\in\mathbb B(H)$.

    Moreover, if $\rho\in S^1(H)$, $\sigma\in S^1(K)$, then $\Phi_\ast(\sigma)=\rho$ if and only if 
    \begin{equation*}
        \sum_{j\in J}v_j\sigma v_j^\ast=\rho
    \end{equation*}
    weakly in $S^1(H)$.
\end{proposition}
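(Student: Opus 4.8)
The plan is to prove sufficiency by direct verification, establish necessity via the Stinespring dilation theorem combined with the structure theory of normal representations of $\mathbb{B}(H)$, and then obtain the predual statement by trace duality. For the ``if'' direction I would observe that each map $x\mapsto v_j^\ast x v_j$ is normal and completely positive, that finite partial sums inherit both properties, and that passing to the strong operator limit preserves them; unitality is immediate from $\sum_{j}v_j^\ast v_j=1$. This direction is routine and I would not dwell on it.

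The substance lies in the ``only if'' direction. I would first invoke Stinespring's dilation theorem for the ucp map $\Phi$: there exist a Hilbert space $L$, a unital $\ast$-representation $\pi\colon \mathbb{B}(H)\to\mathbb{B}(L)$, and an isometry $V\colon K\to L$ with $\Phi(x)=V^\ast\pi(x)V$. Passing to the minimal dilation, the normality of $\Phi$ forces $\pi$ to be normal. The key structural input is then the classification of normal $\ast$-representations of the type~I factor $\mathbb{B}(H)$: every such $\pi$ is unitarily equivalent to an amplification $x\mapsto x\otimes 1$ acting on $H\otimes\ell^2(J)$ for a suitable index set $J$. Identifying $L$ with $H\otimes\ell^2(J)$ and fixing an orthonormal basis $(e_j)_{j\in J}$ of $\ell^2(J)$, I would define operators $v_j\in\mathbb{B}(K;H)$ as the components of $V$, that is, by $V\xi=\sum_{j\in J}(v_j\xi)\otimes e_j$ for $\xi\in K$. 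A direct computation then gives, for $\xi,\xi'\in K$, the identity $\langle\Phi(x)\xi,\xi'\rangle=\langle(x\otimes 1)V\xi,V\xi'\rangle=\sum_{j}\langle v_j^\ast x v_j\xi,\xi'\rangle$, so that $\Phi(x)=\sum_{j}v_j^\ast x v_j$ in the strong operator topology, while $V^\ast V=1$ yields the normalization $\sum_{j}v_j^\ast v_j=1$.

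For the predual statement I would dualize against the trace pairing between $S^1(H)$ and $\mathbb{B}(H)$. For every $x\in\mathbb{B}(H)$ one computes $\tr(\Phi_\ast(\sigma)x)=\tr(\sigma\Phi(x))=\sum_{j}\tr(\sigma v_j^\ast x v_j)=\sum_{j}\tr(v_j\sigma v_j^\ast x)=\tr\bigl((\sum_{j}v_j\sigma v_j^\ast)x\bigr)$, where the interchange of summation and trace is justified by normality and the cyclicity of the trace. Hence $\Phi_\ast(\sigma)=\rho$ holds if and only if $\sum_{j}v_j\sigma v_j^\ast=\rho$ weakly in $S^1(H)$, which is exactly the claimed equivalence.

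The main obstacle I anticipate is not the algebra but the functional-analytic bookkeeping: justifying that the minimal Stinespring representation $\pi$ may indeed be taken normal, invoking the amplification structure of normal representations of $\mathbb{B}(H)$ cleanly, and controlling the convergence of the possibly infinite sums in the correct topologies (strong operator topology for the representation of $\Phi$ and weak$^\ast$ convergence in $S^1(H)$ for the predual). Once the normal dilation is in place, the identification of the Kraus operators and both convergence statements follow by the computations sketched above.
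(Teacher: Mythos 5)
The paper does not actually prove this proposition: it is stated as a known result (Kraus' theorem) and cited to \cite[Theorem 9.2.3]{Dav76}, so there is no internal argument to compare yours against. Your proposal is the standard proof of that cited result, and it is correct. The two structural inputs you invoke --- that the minimal Stinespring dilation of a normal ucp map is itself normal, and that every normal unital $\ast$-representation of $\mathbb B(H)$ is unitarily equivalent to an amplification $x\mapsto x\otimes 1$ on $H\otimes\ell^2(J)$ --- are exactly the right ingredients, and the component decomposition $V\xi=\sum_j (v_j\xi)\otimes e_j$ then produces the Kraus family. Two details deserve the care you already flagged: the identity $\langle\Phi(x)\xi,\xi'\rangle=\sum_j\langle v_j^\ast x v_j\xi,\xi'\rangle$ gives a priori only weak operator convergence of the sum, which you upgrade to strong operator convergence by treating positive $x$ first (the finite partial sums form a bounded increasing net of positive operators, hence converge strongly) and then taking linear combinations; and in the predual computation the interchange $\tr\bigl(\sigma\sum_j v_j^\ast x v_j\bigr)=\sum_j\tr(\sigma v_j^\ast x v_j)$ is legitimate because the partial sums are uniformly bounded, so strong convergence implies ultraweak convergence, against which $y\mapsto\tr(\sigma y)$ is continuous; cyclicity of the trace then yields $\sum_j\tr(v_j\sigma v_j^\ast x)$ and hence the equivalence $\Phi_\ast(\sigma)=\rho\iff\sum_j v_j\sigma v_j^\ast=\rho$ weakly in $S^1(H)$. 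With those points spelled out, your proof is complete and matches the standard literature argument behind the citation.
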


In the situation of Kraus' theorem, we call $\bm{v}=(v_j)_{j\in J}$ a family of Kraus operators for $\Phi$. We define
\begin{equation*}
    L_{\bm{v}}\colon S^2(K)\to \bigoplus_{j\in J}S^2(K;H),\,y\mapsto (v_j y)_{j\in J}.
\end{equation*}

\begin{lemma}
    If $H$, $K$ are Hilbert spaces and $\bm{v}=(v_j)_{j\in J}$ is a family in $\mathbb B(K;H)$ such that $\sum_{j\in J}v_j^\ast v_j=1$ in the strong operator topology, then $L_{\bm v}$ is an isometry.
\end{lemma}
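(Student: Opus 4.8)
The plan is to show directly that $L_{\bm v}$ preserves norms, that is, $\sum_{j\in J}\norm{v_j y}_{S^2(K;H)}^2=\norm{y}_{S^2(K)}^2$ for every $y\in S^2(K)$. Since each $v_j$ is bounded and $y$ is Hilbert--Schmidt, every $v_j y$ lies in $S^2(K;H)$, and writing the Hilbert--Schmidt norm as a trace gives $\norm{v_j y}_{S^2(K;H)}^2=\tr(y^\ast v_j^\ast v_j y)$. Summing over $j$, I would like to pull the sum inside and invoke the hypothesis $\sum_{j\in J}v_j^\ast v_j=1$, but this sum converges only in the strong operator topology, so the interchange requires care.

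To carry this out, I would first fix an orthonormal basis $(e_n)_n$ of $K$ and expand each trace as $\tr(y^\ast v_j^\ast v_j y)=\sum_n\norm{v_j y e_n}^2$. All terms are nonnegative, so by Tonelli's theorem the double sum over $j$ and $n$ may be reordered freely:
\begin{equation*}
    \sum_{j\in J}\norm{v_j y}_{S^2(K;H)}^2=\sum_{j\in J}\sum_n\norm{v_j y e_n}^2=\sum_n\sum_{j\in J}\norm{v_j y e_n}^2.
\end{equation*}
For the inner sum I would apply the hypothesis pointwise: strong operator convergence of the partial sums of $\sum_{j\in J}v_j^\ast v_j$ to $1$ means $\sum_{j\in J}v_j^\ast v_j\xi\to\xi$ for every $\xi\in K$, and pairing with $\xi$ yields $\sum_{j\in J}\norm{v_j\xi}^2=\norm{\xi}^2$. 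Applying this with $\xi=y e_n$ gives $\sum_{j\in J}\norm{v_j y e_n}^2=\norm{y e_n}^2$, and summing over $n$ produces $\sum_n\norm{y e_n}^2=\norm{y}_{S^2(K)}^2$, which is the claim.

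The only genuine obstacle is the legitimacy of the two interchanges of summation, and both are resolved by positivity: Tonelli applies to the nonnegative double sum over $(j,n)$, and the strong-operator hypothesis, although stated as an operator limit, reduces to a scalar identity once tested against a single vector. As a byproduct the argument shows a posteriori that $L_{\bm v}(y)$ really does lie in the Hilbert space direct sum $\bigoplus_{j\in J}S^2(K;H)$, since its total norm equals the finite quantity $\norm{y}_{S^2(K)}^2$.
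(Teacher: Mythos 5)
Your proof is correct and is essentially the same argument as the paper's: expand the Hilbert--Schmidt norm over an orthonormal basis of $K$, interchange the sums over $j$ and the basis index by positivity (Tonelli), and apply the strong-operator hypothesis $\sum_{j\in J}v_j^\ast v_j=1$ vector by vector. The paper states this in a single displayed chain of equalities; you merely spell out the justifications (the trace expansion, Tonelli, and the reduction of the operator identity to the scalar identity $\sum_j\norm{v_j\xi}^2=\norm{\xi}^2$) in more detail.
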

\begin{proof}
    If $(e_\beta)_{\beta\in B}$ is an orthonormal basis of $K$ and $y\in S^2(K)$, then
    \begin{equation*}
        \norm{L_{\bm{v}}y}_2^2=\sum_{j\in J}\sum_{\beta\in B}\norm{v_j ye_\beta}^2=\sum_{\beta\in B}\sum_{j\in J}\norm{v_j y e_\beta}^2=\sum_{\beta\in B}\norm{ye_\beta}^2=\norm{y}_2^2.\qedhere
    \end{equation*}
\end{proof}

\begin{lemma}
    If $H$, $K$ are Hilbert spaces, $\rho\in S^1(H)$, $\sigma\in S^1(K)$ are positive and $\bm v=(v_j)_{j\in J}$ is a family in $\mathbb B(K;H)$ such that $\sum_{j\in J}v_j^\ast v_j=1$ in the strong operator topology and $\sum_{j\in J}v_j\sigma v_j^\ast=\rho$ weakly in $S^1(H)$, then there exists a unique bounded linear operator 
    \begin{equation*}
        R_{\bm v}\colon S^2(H)\to \bigoplus_{j\in J}S^2(K;H)
    \end{equation*}
    such that $R_{\bm v}(x\rho^{1/2})=(xv_j\sigma^{1/2})_{j\in J}$ for $x\in\mathbb B(H)$ and $R_{\bm v}=0$ on $(\mathbb B(H)\rho^{1/2})^\perp$.
\end{lemma}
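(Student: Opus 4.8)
The plan is to define $R_{\bm v}$ on the dense subspace $\mathbb B(H)\rho^{1/2}$ of the closure of the range, verify it is well-defined and bounded there, and then extend it to all of $S^2(H)$ by setting it to zero on the orthogonal complement. The map $x\rho^{1/2}\mapsto (x v_j\sigma^{1/2})_{j\in J}$ is only specified through the factorization $x\rho^{1/2}$, so the first and most delicate point is well-definedness: if $x\rho^{1/2}=x'\rho^{1/2}$ for $x,x'\in\mathbb B(H)$, I must check that $(xv_j\sigma^{1/2})_j=(x'v_j\sigma^{1/2})_j$. Equivalently, writing $z=x-x'$, I need that $z\rho^{1/2}=0$ implies $zv_j\sigma^{1/2}=0$ for every $j\in J$.

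The key computation is a norm estimate that simultaneously yields well-definedness and boundedness. For $x\in\mathbb B(H)$ I would compute
\begin{equation*}
    \sum_{j\in J}\norm{xv_j\sigma^{1/2}}_2^2=\sum_{j\in J}\tr\bigl(\sigma^{1/2}v_j^\ast x^\ast x v_j\sigma^{1/2}\bigr)=\sum_{j\in J}\tr\bigl(x^\ast x\, v_j\sigma v_j^\ast\bigr)=\tr\bigl(x^\ast x\,\rho\bigr)=\norm{x\rho^{1/2}}_2^2,
\end{equation*}
where the first equality uses cyclicity of the trace, the third uses the intertwining relation $\sum_j v_j\sigma v_j^\ast=\rho$ applied against the positive operator $x^\ast x$, and the last is again cyclicity. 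Applying this identity to $z=x-x'$ shows that $z\rho^{1/2}=0$ forces $\sum_j\norm{zv_j\sigma^{1/2}}_2^2=0$, hence each $zv_j\sigma^{1/2}=0$, giving well-definedness; the same identity shows that $R_{\bm v}$ is in fact isometric on $\mathbb B(H)\rho^{1/2}$, so certainly bounded with norm at most $1$.

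The main technical care is justifying the middle step: I would first verify the identity for finite subsums $\sum_{j\in F}$, $F\subseteq J$ finite, where each manipulation is elementary, and then pass to the limit over the net of finite subsets. The partial sums $\sum_{j\in F}v_j\sigma v_j^\ast$ are positive, increasing in $F$, and converge weakly (indeed in the appropriate sense) to $\rho$ by hypothesis, so $\sum_{j\in F}\tr(x^\ast x\,v_j\sigma v_j^\ast)$ increases to $\tr(x^\ast x\,\rho)$; here I use that $x^\ast x\geq 0$ together with normality of the trace. This convergence argument, reconciling the stated weak convergence of $\sum_j v_j\sigma v_j^\ast$ with the trace-class convergence needed to evaluate against $x^\ast x$, is where I expect the only real subtlety to lie. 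Once the isometric estimate on $\mathbb B(H)\rho^{1/2}$ is established, the operator extends by continuity to its closure $\overline{\mathbb B(H)\rho^{1/2}}$, and I define $R_{\bm v}$ to vanish on the orthogonal complement; uniqueness is immediate since any bounded operator agreeing with $R_{\bm v}$ on the dense subspace $\mathbb B(H)\rho^{1/2}$ of one summand and on its complement is determined on all of $S^2(H)$.
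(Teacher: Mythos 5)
Your proof is correct and rests on exactly the same computation as the paper's: the identity $\sum_{j\in J}\norm{xv_j\sigma^{1/2}}_2^2=\sum_{j\in J}\tr(v_j\sigma v_j^\ast x^\ast x)=\tr(\rho x^\ast x)=\norm{x\rho^{1/2}}_2^2$, which simultaneously gives well-definedness, the isometric bound on $\mathbb B(H)\rho^{1/2}$, and hence the extension and uniqueness. The only difference is that you spell out the routine steps the paper leaves implicit; note that the ``subtlety'' you flag is immediate, since weak convergence in $S^1(H)$ means precisely convergence when tested against elements of $\mathbb B(H)=S^1(H)^\ast$, such as $x^\ast x$.
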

\begin{proof}
    We have
    \begin{equation*}
        \sum_{j\in J}\norm{xv_j\sigma^{1/2}}_2^2=\sum_{j\in J}\tr(v_j \sigma v_j^\ast x^\ast x)=\tr(\rho x^\ast x)=\norm{x\rho^{1/2}}_2^2.\qedhere
    \end{equation*}
\end{proof}

The proof of the previous lemma shows that $R_{\bm v}$ is a partial isometry with initial space $\overline{\mathbb B(H)\rho^{1/2}}^{\norm{\cdot}_2}$.

\begin{lemma}
    Let $H$, $K$ be Hilbert spaces, $\rho\in S^1(H)$, $\sigma\in S^1(K)$ positive operators and $\Phi\colon \mathbb B(H)\to\mathbb B(K)$ a normal ucp map such that $\Phi_\ast(\sigma)=\rho$. If $\bm v=(v_j)$ is a family of Kraus operators for $\Phi$ and $x\in \mathbb B(H)$, then $L_{\bm v}^\ast R_{\bm v}(x\rho^{1/2})=\Phi(x)\sigma^{1/2}$ weakly in $S^2(H)$.
\end{lemma}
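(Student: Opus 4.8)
The plan is to compute the adjoint $L_{\bm v}^\ast$ explicitly and then verify the identity by an essentially algebraic calculation, the only genuine subtlety being the passage to the limit in the (possibly infinite) Kraus sum.

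First I would identify $L_{\bm v}^\ast$. With the Hilbert--Schmidt pairing $\langle a,b\rangle=\tr(a^\ast b)$, for $z\in S^2(K)$ and $(w_j)_{j\in J}\in\bigoplus_{j\in J}S^2(K;H)$ one computes
\begin{equation*}
    \langle L_{\bm v}z,(w_j)_j\rangle=\sum_{j\in J}\tr\bigl((v_jz)^\ast w_j\bigr)=\tr\Bigl(z^\ast\sum_{j\in J}v_j^\ast w_j\Bigr),
\end{equation*}
so that $L_{\bm v}^\ast(w_j)_j=\sum_{j\in J}v_j^\ast w_j$. Applying this to $(w_j)_j=R_{\bm v}(x\rho^{1/2})=(xv_j\sigma^{1/2})_j$ gives, at least formally, $L_{\bm v}^\ast R_{\bm v}(x\rho^{1/2})=\sum_{j\in J}v_j^\ast x v_j\,\sigma^{1/2}$, which I would like to recognize as $\Phi(x)\sigma^{1/2}$ by means of the Kraus formula $\Phi(x)=\sum_{j\in J}v_j^\ast x v_j$.

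The one point requiring care --- and the place I expect the main obstacle --- is that for an infinite index set $J$ the Kraus sum $\sum_{j}v_j^\ast x v_j$ converges to $\Phi(x)$ only in the strong operator topology, so one cannot factor $\sigma^{1/2}$ out of the sum in operator norm; this is precisely why the conclusion is stated only weakly. I would therefore test against an arbitrary $z\in S^2(K)$ and reduce the claim to the scalar identity
\begin{equation*}
    \sum_{j\in J}\tr\bigl(T\,v_j^\ast x v_j\bigr)=\tr\bigl(T\,\Phi(x)\bigr),\qquad T:=\sigma^{1/2}z^\ast\in S^1(K),
\end{equation*}
where $T$ is trace class as a product of two Hilbert--Schmidt operators. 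By linearity it suffices to treat $x\ge 0$; then the partial sums $a_F:=\sum_{j\in F}v_j^\ast x v_j$ over finite $F\subseteq J$ form an increasing net bounded above by $\norm{x}\,1$ (using $\sum_j v_j^\ast v_j=1$), hence $a_F\to\Phi(x)$ strongly with $\sup_F\norm{a_F}\le\norm{x}$. The calculation is then closed by the elementary fact that pairing with a fixed trace-class operator, $b\mapsto\tr(Tb)$, is continuous for the strong operator topology on norm-bounded sets (expand $\tr(Tb)$ along the singular value decomposition of $T$ and apply dominated convergence), which yields $\lim_F\tr(Ta_F)=\tr(T\Phi(x))$ and, after undoing the cyclic rearrangement of the trace, the asserted identity.

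I do not anticipate any further difficulty: once the adjoint has been computed, the content of the lemma is exactly this interchange of the Kraus sum with the trace pairing, and the positivity together with the monotone-net argument makes it rigorous with no finiteness or boundedness hypothesis on $J$.
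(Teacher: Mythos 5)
Your proposal is correct and takes essentially the same route as the paper: the paper's one-line proof likewise tests against an arbitrary $y\in S^2(K)$ and computes $\langle y,L_{\bm v}^\ast R_{\bm v}(x\rho^{1/2})\rangle_2=\sum_{j\in J}\tr(y^\ast v_j^\ast xv_j\sigma^{1/2})=\langle y,\Phi(x)\sigma^{1/2}\rangle$. The only difference is that you make explicit the interchange of the infinite Kraus sum with the trace pairing (via reduction to $x\geq 0$, the bounded increasing net of partial sums, and strong-operator continuity of $b\mapsto\tr(Tb)$ on bounded sets), a step the paper leaves implicit.
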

\begin{proof}
    If $y\in S^2(K)$, then
    \begin{equation*}
        \langle y,L_{\bm v}^\ast R_{\bm v}(x\rho^{1/2})\rangle_2=\sum_{j\in J}\tr(y^\ast v_j^\ast xv_j\sigma^{1/2})=\langle y,\Phi(x)\sigma^{1/2}\rangle.\qedhere
    \end{equation*}
\end{proof}

If $\rho\in S^1(H)$ is positive with spectral decomposition $\rho=\sum_{\alpha\in A}\lambda_\alpha \ket{e_\alpha}\bra{e_\alpha}$, where $(e_\alpha)_{\alpha\in A}$ is an orthonormal basis of $H$, and $z\in\mathbb C$, we define
\begin{equation*}
    \rho^z=\sum_{\alpha:\lambda_\alpha>0}\lambda_\alpha^z \ket{e_\alpha}\bra{e_\alpha}.
\end{equation*}
If $\Re z<0$, the operator $\rho^z$ has to be understood as unbounded operator with domain $\dom(\rho^z)=\{\xi\in H\mid \sum_{\alpha:\lambda_\alpha>0}\lambda^{2\Re z}\abs{\langle e_\alpha,\xi\rangle}^2<\infty\}$.

Note that if $p>0$ and $t\in\mathbb R$, then $\rho^{1/p+it}\in S^p(H)$ with $\norm{\rho^{1/p+it}}_p^p=\norm{\rho}_1$.

We define the \emph{KMS inner product} on $\mathbb B(H)$ by
\begin{equation*}
    \langle x,y\rangle_\rho=\tr(x^\ast\rho^{1/2}y\rho^{1/2}).
\end{equation*}
Note that this form is degenerate if $\rho$ has non-trivial kernel, and thus it is not a proper inner product in general.

\begin{lemma}\label{lem:cyclic_space}
    If $\rho\in S^1(H)$ is a positive operator, then $\overline{\mathbb B(H)\rho^{1/2}}^{\norm\cdot_2}=\{y\in S^2(H)\mid \ker \rho\subset \ker y\}$. In particular, if $x\in \mathbb B(H)$, then $\rho^{1/4+it}x\rho^{1/4-it}\in \overline{\mathbb B(H)\rho^{1/2}}^{\norm{\cdot}_2}$ for all $t\in\mathbb R$.
\end{lemma}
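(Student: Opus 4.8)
The plan is to prove the two inclusions in the displayed equality separately and then deduce the ``in particular'' clause by combining the equality with Hölder's inequality for Schatten classes. Throughout, let $p$ denote the orthogonal projection onto $\ker\rho$ and $q=1-p$ the projection onto $\overline{\operatorname{ran}\rho}$; in terms of the spectral decomposition $\rho=\sum_\alpha\lambda_\alpha\ket{e_\alpha}\bra{e_\alpha}$ we have $q=\sum_{\alpha:\lambda_\alpha>0}\ket{e_\alpha}\bra{e_\alpha}$. The condition $\ker\rho\subset\ker y$ is equivalent to $yp=0$, i.e.\ $y=yq$, so the set $\{y\in S^2(H)\mid \ker\rho\subset\ker y\}$ is the kernel of the contraction $y\mapsto yp$ on $S^2(H)$, hence norm closed. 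For the easy inclusion ``$\subset$'', I would note that for every $x\in\mathbb B(H)$ we have $(x\rho^{1/2})p=x\rho^{1/2}p=0$, since $\rho^{1/2}$ vanishes on $\ker\rho$; thus $\mathbb B(H)\rho^{1/2}$, and therefore its closure, is contained in the closed set $\{y\mid yp=0\}$.

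The reverse inclusion is the main point, and the obstacle is that the natural ``inverse'' $\rho^{-1/2}$ is in general unbounded. I would circumvent this by a spectral truncation. Given $y\in S^2(H)$ with $yp=0$, for $\epsilon>0$ set $q_\epsilon=\sum_{\alpha:\lambda_\alpha>\epsilon}\ket{e_\alpha}\bra{e_\alpha}$ and $b_\epsilon=\sum_{\alpha:\lambda_\alpha>\epsilon}\lambda_\alpha^{-1/2}\ket{e_\alpha}\bra{e_\alpha}$, so that $b_\epsilon\in\mathbb B(H)$ with $\norm{b_\epsilon}\le\epsilon^{-1/2}$ and $b_\epsilon\rho^{1/2}=q_\epsilon$. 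Then $yb_\epsilon\in\mathbb B(H)$ and $(yb_\epsilon)\rho^{1/2}=yq_\epsilon\in\mathbb B(H)\rho^{1/2}$. Since $q_\epsilon\uparrow q$ strongly as $\epsilon\downarrow 0$ and $y\in S^2(H)$, a standard dominated-convergence argument (expanding $\norm{y(q-q_\epsilon)}_2^2=\tr((q-q_\epsilon)y^\ast y)$ in the summable spectral decomposition of the trace-class operator $y^\ast y$) gives $\norm{yq_\epsilon-yq}_2\to 0$; as $yq=y$, this yields $y\in\overline{\mathbb B(H)\rho^{1/2}}^{\norm{\cdot}_2}$ and completes the equality.

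For the ``in particular'' statement, fix $x\in\mathbb B(H)$ and $t\in\mathbb R$ and put $y=\rho^{1/4+it}x\rho^{1/4-it}$. By the remark preceding the lemma (with $p=4$), $\rho^{1/4\pm it}\in S^4(H)$, so Hölder's inequality for Schatten classes, using $\tfrac14+\tfrac14=\tfrac12$, gives $y\in S^2(H)$. Moreover $\rho^{1/4-it}$ vanishes on $\ker\rho$ by the definition of $\rho^z$, so $\ker\rho\subset\ker y$; the equality just established then places $y$ in $\overline{\mathbb B(H)\rho^{1/2}}^{\norm{\cdot}_2}$, as required. I expect the only genuine difficulty to lie in the reverse inclusion: the cutoff $q_\epsilon$ is exactly what keeps $b_\epsilon$ bounded while preserving $b_\epsilon\rho^{1/2}=q_\epsilon$, and the Hilbert--Schmidt convergence $yq_\epsilon\to y$ is what upgrades the pointwise identity to membership in the closure.
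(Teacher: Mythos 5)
Your proof is correct and takes essentially the same route as the paper: the easy inclusion via closedness of the condition $\ker\rho\subset\ker y$, and the reverse inclusion via spectral truncation, where your $yb_\epsilon$ is exactly the paper's approximant $x_n$ (with $\epsilon=1/n$), followed by the same dominated-convergence argument. Your explicit Hölder/kernel justification of the ``in particular'' clause is just a spelled-out version of what the paper declares clear.
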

\begin{proof}
    Clearly, $\ker\rho=\ker\rho^{1/2}\subset\ker(x\rho^{1/2})$ and the inclusion of kernels is preserved under limits in the Hilbert--Schmidt norm. Conversely, if $y\in S^2(H)$ and $\ker\rho\subset \ker y$, let $x_n e_\alpha=0$ if $\lambda_\alpha<1/n$ and $x_n e_\alpha=\lambda_\alpha^{-1/2}y e_\alpha$ if $\lambda_\alpha\geq 1/n$. The operator $x_n$ is bounded with $\norm{x_n}\leq\sqrt n \norm{y}$. Moreover,
    \begin{equation*}
        \norm{x_n\rho^{1/2}-y}_2^2=\sum_{\alpha:0<\lambda_\alpha<\frac 1 n}\norm{y e_\alpha}^2\to 0
    \end{equation*}
    as $n\to\infty$ by the dominated convergence theorem. The second claim is now clear.
\end{proof}

\begin{proposition}
    Let $H$, $K$ be Hilbert spaces, $\rho\in S^1(H)$, $\sigma\in S^1(K)$ positive operators and let $x\in\mathbb B(H)$, $y\in \mathbb B(K)$ be self-adjoint. If $\Phi\colon \mathbb B(H)\to \mathbb B(K)$ is a normal ucp map such that $\Phi_\ast(\sigma)=\rho$ and $\bm v=(v_j)$ is a family of Kraus operators for $\Phi$, then there exists a bounded continuous function $f\colon\{z\in\mathbb C:\abs{\Re z}\leq 1/4\}\to\mathbb C$ that is analytic on the interior and satisfies
    \begin{align*}
        f(0)&=\langle\Phi(x),y\rangle_\sigma,\\
        f(1/4+it)&=\langle L_{\bm v}(\sigma^{1/4+it}y\sigma^{1/4-it}),R_{\bm v}(\rho^{1/4+it}x\rho^{1/4-it})\rangle_2,\\
        f(-1/4+it)&=\langle R_{\bm v}(\rho^{1/4+it}x\rho^{1/4-it}),L_{\bm v}(\sigma^{1/4+it}y\sigma^{1/4-it})\rangle_2
    \end{align*}
    for all $t\in\mathbb R$.
\end{proposition}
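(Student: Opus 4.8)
The plan is to exhibit $f$ as the analytic continuation obtained by gluing two honest holomorphic functions defined on the right and left halves of the strip. On $S_+=\{0\le\Re z\le 1/4\}$ I set
\[
    f_+(z)=\langle L_{\bm v}(\sigma^{1/2-\bar z}y\sigma^{\bar z}),\,R_{\bm v}(\rho^{z}x\rho^{1/2-z})\rangle_2,
\]
and on $S_-=\{-1/4\le\Re z\le 0\}$ I set
\[
    f_-(z)=\langle R_{\bm v}(\rho^{-\bar z}x\rho^{1/2+\bar z}),\,L_{\bm v}(\sigma^{1/2+z}y\sigma^{-z})\rangle_2.
\]
The exponents are arranged so that on each half-strip every power of $\rho$ and $\sigma$ occurring inside $L_{\bm v}$ and $R_{\bm v}$ has nonnegative real part; hence all four arguments are genuine elements of the relevant Hilbert--Schmidt spaces and $L_{\bm v},R_{\bm v}$ may legitimately be applied. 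Substituting $z=1/4+it$ into $f_+$ and $z=-1/4+it$ into $f_-$ reproduces the two prescribed boundary expressions verbatim, so it remains to see that $f_+$ and $f_-$ are bounded and holomorphic and that they glue to a single function across the line $\Re z=0$.

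For holomorphy and boundedness I would first record that $z\mapsto\rho^{z}x\rho^{1/2-z}$ is a bounded holomorphic map into $S^2(H)$ on $S_+$: Hölder's inequality for Schatten classes gives $\norm{\rho^{z}x\rho^{1/2-z}}_2\le\norm{\rho^{z}}_{1/\Re z}\norm{x}\norm{\rho^{1/2-z}}_{1/(1/2-\Re z)}=\norm{\rho}_1^{1/2}\norm{x}$ (with the operator norm replacing the $\infty$-order Schatten norm on $\Re z=0$), a uniform bound, while weak holomorphy -- holomorphy of the scalar functions $z\mapsto\langle\eta,\rho^{z}x\rho^{1/2-z}\rangle_2$, immediate from the spectral series and dominated convergence -- together with local boundedness yields holomorphy into $S^2(H)$. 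The same applies to the $\sigma$-argument. Since $R_{\bm v}$ is bounded and linear, $z\mapsto R_{\bm v}(\rho^{z}x\rho^{1/2-z})$ is holomorphic, and since $z\mapsto L_{\bm v}(\sigma^{1/2-\bar z}y\sigma^{\bar z})$ depends antiholomorphically on $z$ while $\langle\cdot,\cdot\rangle_2$ is conjugate-linear in its first slot, the pairing $f_+$ is holomorphic on the interior of $S_+$; an identical argument treats $f_-$. Boundedness on the closed half-strips then follows from the uniform $S^2$-bounds and the contractivity of $L_{\bm v}$ and $R_{\bm v}$.

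To glue along $\Re z=0$ I would verify that $f_+$ and $f_-$ agree there. Swapping the two entries of the inner product shows $f_-(is)=\overline{f_+(is)}$ for $s\in\mathbb R$, so it suffices to prove that $f_+(is)$ is real. Writing $\rho^{is}x\rho^{1/2-is}=(\rho^{is}x\rho^{-is})\rho^{1/2}$ with the bounded self-adjoint operator $M=\rho^{is}x\rho^{-is}$ and invoking the identity $L_{\bm v}^\ast R_{\bm v}(c\rho^{1/2})=\Phi(c)\sigma^{1/2}$ established above, one obtains $f_+(is)=\tr(\sigma^{is}y\sigma^{1/2-is}\Phi(M)\sigma^{1/2})$; since $\Phi(M)$ and $y$ are self-adjoint, cyclicity of the trace identifies this quantity with its own complex conjugate, so $f_+(is)\in\mathbb R$ and hence $f_+(is)=f_-(is)$. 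The two pieces therefore combine to a function $f$ that is continuous on the closed strip, bounded, and holomorphic off the line $\Re z=0$, and by Morera's theorem it is holomorphic on the whole open strip.

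Finally, the boundary identities at $\Re z=\pm 1/4$ hold by construction, and at $z=0$ the operators $\sigma^{0}$ and $\rho^{0}$ are the support projections of $\sigma$ and $\rho$; these act trivially on $\sigma^{1/2}$ and on $v_j\sigma^{1/2}$ (the latter because $\sum_j v_j\sigma v_j^\ast=\rho$ forces $\ker\rho\subset\ker(\sigma^{1/2}v_j^\ast)$, as in \Cref{lem:cyclic_space}), so the projections drop out and $f(0)=\tr(y\sigma^{1/2}\Phi(x)\sigma^{1/2})=\langle\Phi(x),y\rangle_\sigma$. The step I expect to be most delicate is the continuity of the $S^2$-valued maps up to the middle line $\Re z=0$, where the powers $\rho^{it}$ and $\sigma^{it}$ are only strongly, not norm, continuous. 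This is remedied by the observation that if $A_n\to A$ strongly with $\sup_n\norm{A_n}<\infty$ and $K\in S^2$, then $A_nK\to AK$ in $S^2$, by dominated convergence applied to $\sum_\beta\norm{(A_n-A)Ke_\beta}^2$; multiplying an imaginary power by a factor $\rho^{1/2-z}$ or $\sigma^{1/2-\bar z}$ of positive real part thereby upgrades the strong continuity of the imaginary powers to the required $S^2$-continuity.
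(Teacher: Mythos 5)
Your construction is exactly the paper's: your $f_+$ coincides with the paper's function $F(iz)$ on the right half-strip, your $f_-$ is precisely what the Schwarz reflection $F(\bar z)=\overline{F(z)}$ produces on the left half-strip, and your reality check on $\Re z=0$ (via $L_{\bm v}^\ast R_{\bm v}(\,\cdot\,\rho^{1/2})=\Phi(\cdot)\sigma^{1/2}$ and cyclicity of the trace) combined with Morera gluing is just the reflection principle carried out by hand. The boundary-continuity point you flag at the end is treated no more carefully in the paper itself (it is dispatched with ``by the spectral theorem''), so your proposal matches the paper's proof in both substance and level of rigor.
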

\begin{proof}
    If $0\leq \Im z\leq 1/4$ let
    \begin{equation*}
        F(z)=\langle L_{\bm v}(\sigma^{1/2-i\bar z} y \sigma^{i\bar z}),R_{\bm v}(\rho^{-iz} x \rho^{1/2+iz})\rangle_2.
    \end{equation*}
    Since all the exponents have nonnegative real part, all the operators involved are well-defined. Moreover, if $\Im z=s$, then
    \begin{align*}
        \norm{L_{\bm v}(\sigma^{1/2-i\bar z} y \sigma^{i\bar z})}_2&=\norm{\sigma^{1/2-i\bar z} y \sigma^{i\bar z}}_2\\
        &\leq \norm{\sigma^{1/2-i\bar z}}_{(1/2-s)^{-1}}\norm{\sigma^{i\bar z}}_{s^{-1}}\norm{y}\\&=\norm{\sigma}_1^{1/2-s}\norm{\sigma}_1^s\norm{y}\\
        &=\norm{\sigma}_1^{1/2}\norm{y}
    \end{align*}
    and
    \begin{align*}
        \norm{R_{\bm v}(\rho^{-iz} x \rho^{1/2+iz})}_2&\leq \norm{\rho^{-iz}x\rho^{1/2+iz}}_2\\
        &\leq \norm{\rho^{-iz}}_{s^{-1}}\norm{\rho^{1/2+iz}}_{(1/2-s)^{-1}}\norm{x}\\
        &=\norm{\rho}_1^{1/2}\norm{x},
    \end{align*}
    with the obvious modifications if $s=0$.

    Thus
    \begin{equation*}
        \abs{F(z)}\leq \norm{\rho}_1^{1/2}\norm{\sigma}_1^{1/2}\norm{x}\norm{y}
    \end{equation*}
    for all $z\in\mathbb C$ with $0\leq \Im z\leq 1/4$. By the spectral theorem, $F$ is continuous on the strip $\{z\in\mathbb C\mid 0\leq \Im z\leq 1/4\}$ and analytic on the interior.

    Moreover, if $t\in\mathbb R$, then
    \begin{align*}
        F(t)&=\langle L_{\bm v}(\sigma^{1/2-it}y\sigma^{it}),R_{\bm v}(\rho^{-it}x\rho^{1/2+it})\rangle_2\\
        &=\langle \sigma^{1/2-it}y\sigma^{it},\Phi(\rho^{-it}x\rho^{it})\sigma^{1/2}\rangle_2\\
        &=\langle \sigma^{-it}y\sigma^{it},\Phi(\rho^{-it}x\rho^{it})\rangle_\sigma.
    \end{align*}
    In particular, $F(t)\in\mathbb R$.

    By the Schwarz reflection principle, $F$ extends to a bounded continuous function on the strip $\{z\in \mathbb C\mid -1/4\leq \Im z\leq 1/4\}$ that is analytic on the interior and satisfies $F(\bar z)=\overline{F(z)}$ for all $z\in \mathbb C$ with $\abs{\Im z}\leq 1/4$. Now $f(z)=F(iz)$ for $z\in\mathbb C$ with $\abs{\Re z}\leq 1/4$ defines a function with the desired properties.
\end{proof}

\begin{theorem}[Integral representation of the cost]\label{thm:integral_rep_cost}
    Let $H$, $K$ be Hilbert spaces, $\rho\in S^1(H)$, $\sigma\in S^1(K)$ positive operators and let $x\in\mathbb B(H)$, $y\in \mathbb B(K)$ be self-adjoint. If $\Phi\colon \mathbb B(H)\to \mathbb B(K)$ is a normal ucp map such that $\Phi_\ast(\sigma)=\rho$ and $\bm v=(v_j)$ is a family of Kraus operators for $\Phi$, then
    \begin{align*}
        \norm{x}_\rho^2+\norm{y}_\sigma^2-2\langle \Phi(x),y\rangle_\sigma=\int_{\mathbb R}\norm{L_{\bm v}(\sigma^{1/4+it}y\sigma^{1/4-it})-R_{\bm v}(\rho^{1/4+it}x\rho^{1/4-it})}_2^2\,d\mu(t),
    \end{align*}
    where $\mu$ is the Borel probability measure on $\mathbb R$ with density $\frac{d\mu}{dt}= 2(\cosh 2\pi t)^{-1}$.
\end{theorem}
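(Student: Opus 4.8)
The plan is to combine the analytic function $f$ produced by the preceding proposition with a Poisson integral representation on the strip $\{z:\abs{\Re z}\le 1/4\}$. Writing $a_t=L_{\bm v}(\sigma^{1/4+it}y\sigma^{1/4-it})$ and $b_t=R_{\bm v}(\rho^{1/4+it}x\rho^{1/4-it})$, I would first observe that $\norm{a_t}_2^2$ and $\norm{b_t}_2^2$ are in fact independent of $t$. Since $L_{\bm v}$ is an isometry (by the earlier lemma) and $\sigma^{1/4\pm it}\in S^4(K)$, so that $\sigma^{1/4+it}y\sigma^{1/4-it}\in S^2(K)$, self-adjointness of $y$, the identity $\sigma^{1/4-it}\sigma^{1/4+it}=\sigma^{1/2}$ and cyclicity of the trace give
\[
    \norm{a_t}_2^2=\norm{\sigma^{1/4+it}y\sigma^{1/4-it}}_2^2=\tr(\sigma^{1/2}y\sigma^{1/2}y)=\norm{y}_\sigma^2.
\]
For $b_t$ I would use that $R_{\bm v}$ is a partial isometry with initial space $\overline{\mathbb B(H)\rho^{1/2}}^{\norm{\cdot}_2}$, that $\rho^{1/4+it}x\rho^{1/4-it}$ lies in this initial space by \Cref{lem:cyclic_space}, and the analogous trace computation, obtaining $\norm{b_t}_2^2=\norm{x}_\rho^2$.

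Next I would expand the integrand. Since $\langle a_t,b_t\rangle_2=f(1/4+it)$ by the preceding proposition, the previous step yields
\[
    \norm{a_t-b_t}_2^2=\norm{x}_\rho^2+\norm{y}_\sigma^2-2\Re f(1/4+it).
\]
As $\mu$ is a probability measure, integrating gives
\[
    \int_\mathbb R\norm{a_t-b_t}_2^2\,d\mu(t)=\norm{x}_\rho^2+\norm{y}_\sigma^2-2\int_\mathbb R\Re f(1/4+it)\,d\mu(t).
\]
Comparing with the left-hand side and using $f(0)=\langle\Phi(x),y\rangle_\sigma$, the theorem reduces to the single identity $f(0)=\int_\mathbb R\Re f(1/4+it)\,d\mu(t)$. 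Because the proposition also gives $f(-1/4+it)=\overline{f(1/4+it)}$, this is equivalent to $f(0)=\frac12\int_\mathbb R\bigl(f(1/4+it)+f(-1/4+it)\bigr)\,d\mu(t)$, that is, a representation of $f$ at the center of the strip through its boundary values.

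The decisive step, which I expect to be the main obstacle, is establishing this boundary representation. Here I would invoke the harmonic measure of the strip $\Omega=\{z:\abs{\Re z}<1/4\}$ at the center $0$. Mapping $\Omega$ conformally onto the right half-plane via $z\mapsto e^{2\pi i z}$, which sends $0$ to $1$ and the boundary lines $\Re z=\pm 1/4$ to the positive and negative imaginary axes, transports the half-plane Poisson kernel at $1$ to a measure on each boundary line with density $(\cosh 2\pi t)^{-1}$, so the harmonic measure at $0$ is exactly $d\mu$ split evenly between the two lines. Since the proposition guarantees that $f$ is bounded and continuous on the closed strip and analytic in the interior, the real and imaginary parts of $f$ are bounded harmonic functions with continuous boundary data; the Poisson integral of these data is a bounded harmonic function with the same boundary values, and the difference, being bounded and harmonic on the strip with vanishing boundary values, is identically zero by the Phragm\'en--Lindel\"of principle. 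This gives $f(0)=\int_\mathbb R(\cosh 2\pi t)^{-1}\bigl(f(1/4+it)+f(-1/4+it)\bigr)\,dt$, which is precisely the required identity and completes the proof. The care needed lies exactly in the representation on the \emph{unbounded} strip, where bounded harmonic functions need not be determined by boundary data in general; it is the boundedness of $f$ coming from the proposition that makes the uniqueness argument, and hence the whole scheme, go through.
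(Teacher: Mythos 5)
Your proposal is correct, and it follows the paper's proof in its overall structure: the same isometry/partial-isometry computations give $\norm{a_t}_2^2=\norm{y}_\sigma^2$ and $\norm{b_t}_2^2=\norm{x}_\rho^2$ (the paper phrases the first step via $\sigma^{it}$ acting unitarily on $(\ker\sigma)^\perp$ rather than your trace computation, but this is cosmetic), and the same reduction of the cross term to the boundary-value identity
\begin{equation*}
    f(0)=\frac12\int_{\mathbb R}\bigl(f(1/4+it)+f(-1/4+it)\bigr)\,d\mu(t)
\end{equation*}
for the analytic function $f$ supplied by the preceding proposition. Where you genuinely diverge is in the proof of this identity. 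The paper applies the residue theorem to $g(z)=f(z)/\sin 2\pi z$: this function has a single pole at $0$ with residue $f(0)/2\pi$, and since $f$ is bounded while $\abs{\sin 2\pi z}\to\infty$ as $\abs{\Im z}\to\infty$, the horizontal sides of the rectangular contours drop out, leaving exactly the two boundary integrals against $(\cosh 2\pi t)^{-1}\,dt$. You instead identify $\mu$ (split evenly between the two lines) as the harmonic measure of the strip at $0$ via the conformal map $z\mapsto e^{2\pi iz}$, and then argue that $\Re f$ and $\Im f$ equal the Poisson integrals of their boundary data, using boundedness and a Phragm\'en--Lindel\"of uniqueness argument. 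Both routes are valid and both hinge on the boundedness of $f$; your computation of the pushed-forward Poisson kernel is correct ($\frac1\pi\frac{1}{1+y^2}\,\abs{dy}$ with $y=e^{-2\pi t}$ gives $(\cosh 2\pi t)^{-1}\,dt$ on each line). The trade-off: the residue argument is more self-contained and derives the kernel and the identity in one stroke, with the decay of $1/\sin 2\pi z$ doing the work that uniqueness does for you; your route is conceptually more transparent, since it explains the measure $\mu$ \emph{a priori} as harmonic measure, but it quietly requires handling the two exceptional boundary points of the half-plane (the images of $t\to\pm\infty$), where the transported boundary data need not extend continuously---a removable issue for bounded functions, but one a fully rigorous writeup should acknowledge.
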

\begin{proof}
    As noted above, $\rho^{1/4+it}\in S^4(H)$ and $\sigma^{1/4+it}\in S^4(K)$, so the expression on the right side is well-defined.
    
    Since $L_{\bm v}$ is an isometry, we have
    \begin{align*}
        \norm{L_{\bm v}(\sigma^{1/4+it}y\sigma^{1/4-it})}_2^2=\norm{\sigma^{1/4+it}y\sigma^{1/4-it}}_2^2=\norm{\sigma^{1/4}y\sigma^{1/4}}_2^2=\norm{y}_\sigma^2,
    \end{align*}
    where we used that $\sigma^{it}$ acts as a unitary on $(\ker \sigma)^\perp$.
    
    Moreover, since $\rho^{1/4+it}x\rho^{1/4-it}$ belongs to the initial space of $R_{\bm v}$ by the previous lemma, $\norm{R_{\bm v}(\rho^{1/4+it}x\rho^{1/4-it})}_2^2=\norm{x}_\rho^2$ by the same argument.

    To evaluate the cross term, note that the previous proposition guarantees the existence of a bounded continuous function $f\colon \{z\in\mathbb C:\abs{\Re z}\leq 1/4\}\to \mathbb C$ that is analytic on the interior and satisfies $f(0)=\langle \Phi(x),y\rangle_\sigma$ and 
    \begin{equation*}
        f(1/4+it)+f(-1/4+it)=2\Re\langle L_{\bm v}(\sigma^{1/4+it}y\sigma^{1/4-it}),R_{\bm v}(\rho^{1/4+it}x\rho^{1/4-it})\rangle.
    \end{equation*}
    For $z\in\mathbb C$ with $\abs{\Re z}\leq\frac1  4$ let $g(z)=\frac{f(z)}{\sin 2\pi z}$. The function $g$ is continuous, bounded and meromorphic on the interior of its domain with a single pole with residue $f(0)/2\pi$ at $0$. Moreover, $\lim_{R\to\infty}\abs{g(s\pm iR)}=0$ uniformly in $s\in [-1/4,1/4]$. By the residue theorem,
    \begin{equation*}
        f(0)=\frac 1 2\int_{\mathbb R}(f(1/4+it)+f(-1/4+it))\,d\mu(t).\qedhere
    \end{equation*}
\end{proof}

\begin{remark}
    Informally, the right side of the integral representation is
    \begin{equation*}
        \sum_{j\in J}\int_{\mathbb R}\norm{v_j \sigma^{1/4+it}y\sigma^{1/4-it}-\rho^{1/4+it}x\rho^{-1/4-it}v_j\sigma^{1/2}}_2^2\frac{2\,dt}{\cosh 2\pi t},
    \end{equation*}
    but $\rho^{-1/4-it}$ is typically not bounded, so some care has to be taken interpreting this expression (which is achieved by the definition of the map $R_{\bm v}$).
\end{remark}

\begin{remark}
    If $\rho=\sigma$ is non-singular and $\Phi$ is symmetric with respect to the KMS inner product, then the (closure of) the quadratic form
    \begin{equation*}
        \rho^{1/4}\mathbb B(H)\rho^{1/4}\to [0,\infty),\,\rho^{1/4}x\rho^{1/4}\mapsto \norm{x}_{\rho}^2-\langle x,\Phi(x)\rangle_{\rho}
    \end{equation*}
    is a bounded conservative completely Dirichlet form. Hence this integral representation is closely related to the problem of expressing completely Dirichlet forms in terms of derivations. This connection will be further explored in follow-up work with Matthijs Vernooij.
\end{remark}

\begin{definition}[Cost]
Let $H$, $K$ be Hilbert spaces, $\rho\in S^1(H)$, $\sigma\in S^1(K)$ positive operators and let $\bm x\in \mathbb B(H)^d$, $\bm y\in \mathbb B(K)^d$ be $d$-tuples of self-adjoint operators. If $\Phi\colon \mathbb B(H)\to \mathbb B(K)$ is a normal ucp map such that $\Phi_\ast(\sigma)=\rho$, then its \emph{cost} is defined as
\begin{equation*}
    C_{\rho,\sigma,\bm x,\bm y}(\Phi)=\sum_{k=1}^d(\norm{x_k}_\rho^2+\norm{y_k}_\sigma^2-2\langle\Phi(x_k),y_k\rangle_\sigma).
\end{equation*}
\end{definition}

\begin{theorem}[Subadditivity of the cost]\label{thm:subadditivity_cost}
    For $k\in \{1,2,3\}$ let $H_k$ be a Hilbert space, $\rho_k\in S^1(H_k)$ a positive operator and $\bm{x}_k\in \mathbb B(H_k)^d$ a $d$-tuple of self-adjoint operators. If $\Phi_{12}\colon \mathbb B(H_1)\to \mathbb B(H_2)$ and $\Phi_{23}\colon\mathbb B(H_2)\to\mathbb B(H_3)$ are normal ucp maps such that $\Phi_{12\ast}(\rho_2)=\rho_1$ and $\Phi_{23\ast}(\rho_3)=\rho_2$, then
    \begin{equation*}
        C_{\rho_1,\rho_3,\bm x_1,\bm x_3}(\Phi_{23}\circ \Phi_{12})^{1/2}\leq C_{\rho_1,\rho_2,\bm x_1,\bm x_2}(\Phi_{12})^{1/2}+C_{\rho_2,\rho_3,\bm x_2,\bm x_3}(\Phi_{23})^{1/2}.
    \end{equation*}
\end{theorem}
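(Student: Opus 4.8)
The plan is to use the integral representation (\Cref{thm:integral_rep_cost}) to realize all three costs as squared norms of explicit vectors in one fixed Hilbert space, and then read off the inequality from the ordinary triangle inequality there. First I would fix a family $\bm u=(u_i)_{i\in I}$ of Kraus operators for $\Phi_{12}$ and a family $\bm w=(w_j)_{j\in J}$ for $\Phi_{23}$; a direct computation shows that $\bm v=(u_iw_j)_{(i,j)\in I\times J}$ is a family of Kraus operators for $\Phi_{23}\circ\Phi_{12}$, since $\sum_{i,j}(u_iw_j)^\ast(u_iw_j)=1$ and $\sum_{i,j}u_iw_j\rho_3w_j^\ast u_i^\ast=\rho_1$. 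For $t\in\mathbb R$ and $k\in\{1,\dots,d\}$ abbreviate $A_k(t)=\rho_1^{1/4+it}x_{1,k}\rho_1^{1/4-it}$, $B_k(t)=\rho_2^{1/4+it}x_{2,k}\rho_2^{1/4-it}$, and $C_k(t)=\rho_3^{1/4+it}x_{3,k}\rho_3^{1/4-it}$. By \Cref{thm:integral_rep_cost} the three costs are $\sum_k\int_{\mathbb R}\norm{\cdots}_2^2\,d\mu(t)$ of the integrands $L_{\bm v}(C_k(t))-R_{\bm v}(A_k(t))$, $L_{\bm w}(C_k(t))-R_{\bm w}(B_k(t))$, and $L_{\bm u}(B_k(t))-R_{\bm u}(A_k(t))$, so it suffices to compare these integrands pointwise in $t$ and $k$.

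The heart of the matter is a pointwise splitting of the composed integrand. I would introduce the intermediate element $M_k(t)=(u_i[R_{\bm w}(B_k(t))]_j)_{(i,j)}\in\bigoplus_{i,j}S^2(H_3;H_1)$ and write
\[
L_{\bm v}(C_k(t))-R_{\bm v}(A_k(t))=\bigl(L_{\bm v}(C_k(t))-M_k(t)\bigr)+\bigl(M_k(t)-R_{\bm v}(A_k(t))\bigr).
\]
For the first summand I factor $u_i$ out on the left, so its $(i,j)$-entry is $u_i\bigl([L_{\bm w}(C_k(t))]_j-[R_{\bm w}(B_k(t))]_j\bigr)$; the telescoping $\sum_i\norm{u_i(\cdot)}_2^2=\norm{\cdot}_2^2$ afforded by $\sum_iu_i^\ast u_i=1$ then shows its squared norm equals $\norm{L_{\bm w}(C_k(t))-R_{\bm w}(B_k(t))}_2^2$, the $k$-integrand of $C_{\rho_2,\rho_3,\bm x_2,\bm x_3}(\Phi_{23})$.

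For the second summand I would introduce the right-multiplication map $\tilde R_{\bm w}\colon S^2(H_2;H_1)\to\bigoplus_jS^2(H_3;H_1)$ determined by $\tilde R_{\bm w}(Z\rho_2^{1/2})=(Zw_j\rho_3^{1/2})_j$ for $Z\in\mathbb B(H_2;H_1)$; the construction and partial-isometry property of the $R$-map carry over verbatim with $H_1$ as a passive coefficient space, the initial space being $\{Z\in S^2(H_2;H_1):\ker\rho_2\subseteq\ker Z\}$. Verifying on the dense set $\mathbb B(\cdot)\rho^{1/2}$ and passing to the closure yields the two intertwining identities $u_i[R_{\bm w}(B_k(t))]_j=\tilde R_{\bm w}\bigl([L_{\bm u}(B_k(t))]_i\bigr)_j$ and $[R_{\bm v}(A_k(t))]_{(i,j)}=\tilde R_{\bm w}\bigl([R_{\bm u}(A_k(t))]_i\bigr)_j$, so that by linearity the $(i,j)$-entry of the second summand is $\tilde R_{\bm w}(D_i)_j$ with $D_i=[L_{\bm u}(B_k(t))]_i-[R_{\bm u}(A_k(t))]_i$. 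Since both $u_iB_k(t)$ and $[R_{\bm u}(A_k(t))]_i$ annihilate $\ker\rho_2$, we have $\ker\rho_2\subseteq\ker D_i$, so $D_i$ lies in the initial space of $\tilde R_{\bm w}$; hence $\sum_j\norm{\tilde R_{\bm w}(D_i)_j}_2^2=\norm{D_i}_2^2$, and summing over $i$ gives $\norm{L_{\bm u}(B_k(t))-R_{\bm u}(A_k(t))}_2^2$, the $k$-integrand of $C_{\rho_1,\rho_2,\bm x_1,\bm x_2}(\Phi_{12})$.

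With these identities the conclusion is immediate. Viewing $t\mapsto(L_{\bm v}(C_k(t))-R_{\bm v}(A_k(t)))_k$ as a vector $\Xi$ in $\mathcal H=\bigoplus_{k=1}^dL^2(\mathbb R,\mu;\bigoplus_{i,j}S^2(H_3;H_1))$, the splitting exhibits $\Xi=\Xi_{23}+\Xi_{12}$ with $\norm{\Xi}_{\mathcal H}^2=C_{\rho_1,\rho_3,\bm x_1,\bm x_3}(\Phi_{23}\circ\Phi_{12})$, $\norm{\Xi_{23}}_{\mathcal H}^2=C_{\rho_2,\rho_3,\bm x_2,\bm x_3}(\Phi_{23})$, and $\norm{\Xi_{12}}_{\mathcal H}^2=C_{\rho_1,\rho_2,\bm x_1,\bm x_2}(\Phi_{12})$, and the triangle inequality $\norm{\Xi}_{\mathcal H}\le\norm{\Xi_{23}}_{\mathcal H}+\norm{\Xi_{12}}_{\mathcal H}$ in $\mathcal H$ is exactly the claim. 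The one genuinely delicate point—and the reason for introducing $\tilde R_{\bm w}$—is the second summand: the naive factorization there involves the unbounded operators $\rho_2^{-1/4-it}$ and $\rho_2^{-1/2}$, and the purpose of $\tilde R_{\bm w}$ together with the kernel inclusion $\ker\rho_2\subseteq\ker D_i$ is precisely to make that manipulation rigorous.
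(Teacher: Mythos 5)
Your proof is correct and follows essentially the same route as the paper: the same composed Kraus family, the same intermediate element $(\bigoplus_j L_{\bm u})R_{\bm w}(B_k(t))$, the same pointwise splitting with the isometry of left multiplication and the (partial) isometry of the $R$-map, and the same final appeal to the Hilbert space triangle inequality after integrating against $\mu$. Your map $\tilde R_{\bm w}$ is exactly what the paper denotes $\bigoplus_i R_{\bm w}$ — you merely make the extension to rectangular Hilbert--Schmidt spaces explicit (and obtain an equality where the paper settles for an inequality), and you treat general $d$ directly rather than reducing to $d=1$; these are presentational refinements, not a different argument.
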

\begin{proof}
    For ease of notation, we drop the subscripts of the cost in the following. We can assume that $d=1$, since the general case then follows from the triangle inequality for the Euclidean distance on $\mathbb R^d$. Let $\bm v=(v_i)_{i\in I}$ be a family of Kraus operators for $\Phi_{12}$ and $\bm w=(w_j)_{j\in J}$ a family of Kraus operators for $\Phi_{23}$. Clearly, $\bm v\otimes \bm w=(v_i w_j)_{(i,j)\in I\times J}$ is a family of Kraus operators for $\Phi_{23}\circ \Phi_{12}$.

    Note that $L_{\bm v\otimes\bm w}=(\bigoplus_j L_{\bm v})\circ L_{\bm w}$ and $R_{\bm v\otimes \bm w}=(\bigoplus_i R_{\bm w})\circ R_{\bm v}$. By \Cref{thm:integral_rep_cost},
    \begin{align*}
        C(\Phi_{23}\circ\Phi_{12})=\int_{\mathbb R}\norm{L_{\bm v\otimes\bm w}(\rho_3^{1/4+it}x_3\rho_3^{1/4+it})-R_{\bm v\otimes\bm w}(\rho_1^{1/4+it}x_1\rho_1^{1/4+it})}_2^2\,d\mu(t).
    \end{align*}
    Since $L_{\bm v}$ is an isometry and $R_{\bm w}$ is a partial isometry, we have
    \begin{align*}
        &\norm{L_{\bm v\otimes \bm w}(\rho_3^{1/4+it}x_3\rho_3^{1/4+it})-(\oplus_j L_{\bm v})R_{\bm w}(\rho_2^{1/4+it}x_2\rho_2^{1/4+it})}_2\\
        &\qquad=\norm{L_{\bm w}(\rho_3^{1/4+it}x_3\rho_3^{1/4+it})-R_{\bm w}(\rho_2^{1/4+it}x_2\rho_2^{1/4+it})}_2
    \end{align*}
    and
    \begin{align*}
        &\norm{(\oplus_i R_{\bm w})L_{\bm v}(\rho_2^{1/4+it}x_2\rho_2^{1/4+it})-R_{\bm v\otimes\bm w}(\rho_1^{1/4+it}x_1\rho_1^{1/4+it})}_2\\
        &\qquad\leq \norm{L_{\bm v}(\rho_2^{1/4+it}x_2\rho_2^{1/4+it})-R_{\bm v}(\rho_1^{1/4+it}x_1\rho_1^{1/4+it})}_2.
    \end{align*}
    Note further that $(\bigoplus_j L_{\bm v})\circ R_{\bm w}=(\bigoplus_i R_{\bm w})\circ L_{\bm v}$ directly from the definition of these maps. Now the claimed subadditivity for the cost follows from the triangle inequality for the Hilbert--Schmidt norm and the triangle inequality in $L^2(\mathbb R,\mu)$.
\end{proof}

\begin{definition}[Wasserstein divergence]
Let $H$ be a Hilbert space and $\bm x\in \mathbb B(H)^d$ a $d$-tuple of self-adjoint operators. If $\rho,\sigma\in S^1(H)$ are positive trace-class operators with trace $1$, their \emph{Wasserstein divergence} is defined as
\begin{equation*}
    W_{2,\mathrm{DPT}}(\rho,\sigma)=\inf\{C_{\rho,\sigma,\bm x,\bm y}(\Phi)^{1/2}\mid \Phi\colon \mathbb B(H)\to\mathbb B(H)\text{ normal ucp, }\Phi_\ast(\sigma)=\rho\}.
\end{equation*}
\end{definition}

\begin{remark}
   For the fact that this definition of the distance $W_{2,\mathrm{DPT}}$ with the original one given by De Palma and Trevisan, see \cite[Eq. (95)]{BPTV25}.
\end{remark}

\begin{corollary}\label{cor:triangle_bounded}
    The Wasserstein divergence $W_{2,\mathrm{DPT}}$ satisfies the triangle inequality.
\end{corollary}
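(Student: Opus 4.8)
The plan is to deduce the triangle inequality directly from the subadditivity of the cost (\Cref{thm:subadditivity_cost}), specialized to the case where all three Hilbert spaces coincide with $H$ and all three cost tuples coincide with the fixed tuple $\bm x$. The only content beyond \Cref{thm:subadditivity_cost} is a standard approximation argument to pass from a pointwise inequality between individual transport maps to the corresponding inequality between the infima defining $W_{2,\mathrm{DPT}}$.

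First I would fix three states $\rho_1,\rho_2,\rho_3\in S^1(H)$ and an arbitrary $\varepsilon>0$. By definition of $W_{2,\mathrm{DPT}}(\rho_1,\rho_2)$ as an infimum, I can choose a normal ucp map $\Phi_{12}\colon\mathbb B(H)\to\mathbb B(H)$ with $\Phi_{12\ast}(\rho_2)=\rho_1$ and $C_{\rho_1,\rho_2,\bm x,\bm x}(\Phi_{12})^{1/2}\leq W_{2,\mathrm{DPT}}(\rho_1,\rho_2)+\varepsilon$, and likewise a normal ucp map $\Phi_{23}\colon\mathbb B(H)\to\mathbb B(H)$ with $\Phi_{23\ast}(\rho_3)=\rho_2$ and $C_{\rho_2,\rho_3,\bm x,\bm x}(\Phi_{23})^{1/2}\leq W_{2,\mathrm{DPT}}(\rho_2,\rho_3)+\varepsilon$. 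The relevant constraint sets are nonempty, since for instance $x\mapsto\tr(\rho_1 x)\,1$ is a normal ucp map whose predual sends any state to $\rho_1$; in particular all the infima are finite.

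Next I would set $\Psi=\Phi_{23}\circ\Phi_{12}$, which is again a normal ucp map on $\mathbb B(H)$. Taking preduals reverses the order of composition, so $\Psi_\ast=\Phi_{12\ast}\circ\Phi_{23\ast}$ and hence $\Psi_\ast(\rho_3)=\Phi_{12\ast}(\Phi_{23\ast}(\rho_3))=\Phi_{12\ast}(\rho_2)=\rho_1$. Thus $\Psi$ is admissible in the infimum defining $W_{2,\mathrm{DPT}}(\rho_1,\rho_3)$. Applying \Cref{thm:subadditivity_cost} with $H_1=H_2=H_3=H$, the states $\rho_k$ as above, and $\bm x_1=\bm x_2=\bm x_3=\bm x$, I obtain
\begin{align*}
    W_{2,\mathrm{DPT}}(\rho_1,\rho_3)
    &\leq C_{\rho_1,\rho_3,\bm x,\bm x}(\Psi)^{1/2}\\
    &\leq C_{\rho_1,\rho_2,\bm x,\bm x}(\Phi_{12})^{1/2}+C_{\rho_2,\rho_3,\bm x,\bm x}(\Phi_{23})^{1/2}\\
    &\leq W_{2,\mathrm{DPT}}(\rho_1,\rho_2)+W_{2,\mathrm{DPT}}(\rho_2,\rho_3)+2\varepsilon.
\end{align*}
Letting $\varepsilon\to 0$ yields the triangle inequality.

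Since \Cref{thm:subadditivity_cost} carries out all the analytic work, I do not expect any serious obstacle here; the single point requiring a moment's care is the verification that the composed channel $\Psi$ satisfies the correct marginal constraint $\Psi_\ast(\rho_3)=\rho_1$, which rests on the order-reversal of preduals under composition. Everything else is the routine $\varepsilon$-argument for deducing an inequality of infima from a pointwise inequality.
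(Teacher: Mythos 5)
Your proof is correct and is exactly the argument the paper intends: the corollary is stated without proof precisely because it follows from \Cref{thm:subadditivity_cost} by the standard $\varepsilon$-argument you give, composing near-optimal channels and using that preduals compose in reverse order. Your extra checks (the marginal constraint $\Psi_\ast(\rho_3)=\rho_1$ and nonemptiness of the admissible set via $x\mapsto\tr(\rho_1 x)\,1$) are both valid and fill in the details the paper leaves implicit.
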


\section{Extension to unbounded costs}

In this section we extend the results from the previous section to the case when the cost operators are unbounded.

\begin{lemma}\label{lem:ucp_KMS_bounded}
    Let $H$, $K$ be Hilbert spaces, $\rho\in S^1(H)$, $\sigma\in S^1(K)$ positive operators and $\Phi\colon \mathbb B(H)\to\mathbb B(K)$ a normal ucp map such that $\Phi_\ast(\sigma)=\rho$. If $x\in \mathbb B(H)$, then $\norm{\sigma^{1/4}\Phi(x)\sigma^{1/4}}_2\leq \norm{\rho^{1/4}x\rho^{1/4}}_2$.
\end{lemma}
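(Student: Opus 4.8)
The plan is to reformulate the claim in terms of the KMS inner product and then read it off from the non-negativity of the cost that \Cref{thm:integral_rep_cost} provides. First, I would observe that for any $b\in\mathbb B(K)$ one has $\norm{\sigma^{1/4}b\sigma^{1/4}}_2^2=\tr(b^\ast\sigma^{1/2}b\sigma^{1/2})=\norm{b}_\sigma^2$ by cyclicity of the trace, and likewise $\norm{\rho^{1/4}x\rho^{1/4}}_2=\norm{x}_\rho$. Hence the assertion is equivalent to the KMS contractivity estimate $\norm{\Phi(x)}_\sigma\leq\norm{x}_\rho$ for all $x\in\mathbb B(H)$.

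For self-adjoint $x$ I would argue as follows. Since $\Phi$ is completely positive and unital, it is $\ast$-preserving, so $y:=\Phi(x)$ is again self-adjoint. Applying \Cref{thm:integral_rep_cost} (with $d=1$) to the pair $x,y$ shows that the left-hand side $\norm{x}_\rho^2+\norm{y}_\sigma^2-2\langle\Phi(x),y\rangle_\sigma$ equals a non-negative integral, hence is $\geq 0$. Substituting $y=\Phi(x)$ and using $\langle\Phi(x),\Phi(x)\rangle_\sigma=\norm{\Phi(x)}_\sigma^2$ collapses this to $\norm{x}_\rho^2-\norm{\Phi(x)}_\sigma^2\geq 0$, which is the desired bound for self-adjoint $x$.

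The general case I would reduce to the self-adjoint one by splitting $x=a+ib$ into its self-adjoint real and imaginary parts $a=(x+x^\ast)/2$ and $b=(x-x^\ast)/(2i)$. For self-adjoint $a,b$ the cross term $\langle a,b\rangle_\rho=\tr(a\rho^{1/2}b\rho^{1/2})$ is real by cyclicity of the trace, so the mixed contributions cancel and one obtains the Pythagorean identity $\norm{x}_\rho^2=\norm{a}_\rho^2+\norm{b}_\rho^2$; the same reasoning applied to the self-adjoint operators $\Phi(a),\Phi(b)$ gives $\norm{\Phi(x)}_\sigma^2=\norm{\Phi(a)}_\sigma^2+\norm{\Phi(b)}_\sigma^2$. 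Adding the two self-adjoint estimates $\norm{\Phi(a)}_\sigma^2\leq\norm{a}_\rho^2$ and $\norm{\Phi(b)}_\sigma^2\leq\norm{b}_\rho^2$ then yields $\norm{\Phi(x)}_\sigma^2\leq\norm{x}_\rho^2$.

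The substance of the argument is entirely contained in the self-adjoint step, where the positivity of the integrand in \Cref{thm:integral_rep_cost} does all the work; the remaining manipulations are routine. The only point requiring a moment's care is the reality of the off-diagonal KMS inner product $\langle a,b\rangle_\rho$, which is precisely what makes the real/imaginary decomposition additive in the KMS norm and thus lets the self-adjoint bound propagate to arbitrary $x$.
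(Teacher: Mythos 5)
Your proposal is correct, and it is not circular: \Cref{thm:integral_rep_cost} is proved before and independently of \Cref{lem:ucp_KMS_bounded}, so you may invoke it here. But your route is genuinely different from the paper's. The paper's proof is two lines: by the Kadison--Schwarz inequality, $\norm{\Phi(x)\sigma^{1/2}}_2^2=\tr(\Phi(x)^\ast\Phi(x)\sigma)\leq\tr(\Phi(x^\ast x)\sigma)=\tr(x^\ast x\,\Phi_\ast(\sigma))=\norm{x\rho^{1/2}}_2^2$, and likewise $\norm{\sigma^{1/2}\Phi(x)}_2\leq\norm{\rho^{1/2}x}_2$; the KMS bound $\norm{\sigma^{1/4}\Phi(x)\sigma^{1/4}}_2\leq\norm{\rho^{1/4}x\rho^{1/4}}_2$ then follows by interpolation of Hilbert spaces between these two GNS-type endpoint estimates. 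You instead read the inequality off from non-negativity of the cost: taking $y=\Phi(x)$ in \Cref{thm:integral_rep_cost} collapses the left side to $\norm{x}_\rho^2-\norm{\Phi(x)}_\sigma^2$, which equals a non-negative integral, settling the self-adjoint case; your reduction of general $x=a+ib$ to this case is also sound, since $\langle a,b\rangle_\rho=\tr(a\rho^{1/2}b\rho^{1/2})$ is indeed real by cyclicity of the trace and the Hermitian symmetry of the KMS form, giving the Pythagorean identities you use. As for what each approach buys: the paper's argument is shorter, treats arbitrary $x$ in one stroke, and keeps this section logically independent of the complex-analytic machinery of Section 2, at the price of citing an external interpolation theorem; your argument stays entirely inside the paper's own results and makes the conceptually pleasing point that KMS contractivity of a transport-compatible ucp map is precisely the non-negativity of its cost, at the price of needing the real/imaginary decomposition (because \Cref{thm:integral_rep_cost} is stated only for self-adjoint operators) and of resting on the full strength of the integral representation.
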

\begin{proof}
    By the Kadison--Schwarz inequality,
    \begin{equation*}
        \norm{\Phi(x)\sigma^{1/2}}_2^2=\tr(\Phi(x)^\ast \Phi(x)\sigma)\leq \tr(x^\ast x\Phi_\ast(\sigma))=\norm{x\rho^{1/2}}_2^2.
    \end{equation*}
    Likewise, $\norm{\sigma^{1/2}\Phi(x)}_2\leq \norm{\rho^{1/2}x}_2$. Now the result follows from the interpolation theory of Hilbert spaces (see \cite{Don67}).
\end{proof}

\begin{definition}[Finite second moments]
    Let $H$ be a Hilbert space and $\bm x=(x_k)_{k=1}^d$ a $d$-tuple of self-adjoint operators on $H$. We say that a positive operator $\rho\in S^1(H)$ has \emph{finite second moments} if $\rho^{1/4}H\subset \dom(x_k)$, $\rho^{1/4}x_k \rho^{1/4}$ is closable and $\overline{\rho^{1/4}x_k\rho^{1/4}}\in S^2(H)$ for $1\leq k\leq d$. In this case, we write $i_\rho(x_k)$ for $\overline{\rho^{1/4}x_k\rho^{1/4}}$.
\end{definition}

Note that if $\rho$ has finite second moments, then 
\begin{equation*}
    (\rho^{1/4}x_k\rho^{1/4})^\ast\supset \rho^{1/4}(\rho^{1/4}x_k)^\ast=\rho^{1/4}x_k\rho^{1/4}
\end{equation*}
by Schmüdgen, Proposition 1.7. Hence $i_\rho(x_k)$ is self-adjoint.

\begin{lemma}\label{lem:KMS_embedding}
    If $\rho\in S^1(H)$ is positive, then
    \begin{equation*}
        \overline{\{\rho^{1/4}x\rho^{1/4}\mid x=x^\ast\in\mathbb B(H)\}}^{\norm\cdot_2}=\{y\in S^2(H)\mid y=y^\ast,\ker\rho\subset\ker y\}.
    \end{equation*}
\end{lemma}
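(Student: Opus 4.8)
The plan is to prove two inclusions. The statement generalizes \Cref{lem:cyclic_space}, which already handled the analogous equality for $\overline{\mathbb B(H)\rho^{1/2}}^{\norm\cdot_2}=\{y\in S^2(H)\mid \ker\rho\subset\ker y\}$; here we additionally impose self-adjointness on both sides and use the symmetric two-sided factor $\rho^{1/4}\,\cdot\,\rho^{1/4}$ rather than the one-sided $\cdot\,\rho^{1/2}$. So I would try to leverage that lemma directly rather than redo the approximation argument from scratch.

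For the inclusion ``$\subseteq$'', I would note that each $\rho^{1/4}x\rho^{1/4}$ with $x=x^\ast$ is itself self-adjoint and has $\ker\rho\subset\ker(\rho^{1/4}x\rho^{1/4})$, and both properties are preserved under Hilbert--Schmidt limits: self-adjointness because the adjoint is $\norm\cdot_2$-continuous on $S^2(H)$, and the kernel inclusion as in \Cref{lem:cyclic_space}. For the reverse inclusion ``$\supseteq$'', take $y\in S^2(H)$ self-adjoint with $\ker\rho\subset\ker y$. By \Cref{lem:cyclic_space} there is a sequence $x_n\in\mathbb B(H)$ with $x_n\rho^{1/2}\to y$ in $\norm\cdot_2$. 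The key trick is to symmetrize: writing $\rho^{1/4}(\rho^{1/4}x_n)$ suggests approximating $y$ instead by operators of the form $\rho^{1/4}z_n\rho^{1/4}$. Concretely, using the spectral decomposition $\rho=\sum_\alpha\lambda_\alpha\ket{e_\alpha}\bra{e_\alpha}$, I would define $z_n$ on matrix entries by cutting off small eigenvalues: set $\langle e_\alpha,z_n e_\beta\rangle=\lambda_\alpha^{-1/4}\lambda_\beta^{-1/4}\langle e_\alpha,ye_\beta\rangle$ when both $\lambda_\alpha,\lambda_\beta\geq 1/n$ and $0$ otherwise, so that $z_n$ is bounded, $z_n=z_n^\ast$ (since $y=y^\ast$), and $\rho^{1/4}z_n\rho^{1/4}$ agrees with $y$ on the block where both eigenvalues are $\geq 1/n$.

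The estimate I expect to carry out is $\norm{\rho^{1/4}z_n\rho^{1/4}-y}_2^2=\sum_{(\alpha,\beta)\notin S_n}\abs{\langle e_\alpha,ye_\beta\rangle}^2$, where $S_n=\{(\alpha,\beta):\lambda_\alpha,\lambda_\beta\geq 1/n\}$, which tends to $0$ as $n\to\infty$ by dominated convergence, using $\norm{y}_2^2=\sum_{\alpha,\beta}\abs{\langle e_\alpha,ye_\beta\rangle}^2<\infty$ together with the hypothesis $\ker\rho\subset\ker y$ (which guarantees that the entries with some $\lambda=0$ vanish, so the leftover sum only involves genuinely small positive eigenvalues). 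This shows $y$ is a $\norm\cdot_2$-limit of self-adjoint operators $\rho^{1/4}z_n\rho^{1/4}$, giving the reverse inclusion.

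The main obstacle is controlling the boundedness and the cross-terms of $z_n$: unlike the one-sided situation in \Cref{lem:cyclic_space}, the two-sided spectral cutoff touches both row and column indices, and the operator norm bound $\norm{z_n}\leq n^{1/2}\cdot n^{1/2}\norm{y}=n\norm{y}$ is cruder, but since we only need each $z_n$ bounded (not uniformly), this is harmless. The genuinely delicate point is ensuring that $\ker\rho\subset\ker y$ forces $\langle e_\alpha,ye_\beta\rangle=0$ whenever $\lambda_\alpha=0$ or $\lambda_\beta=0$; the condition $\ker\rho\subset\ker y$ gives the vanishing of columns indexed by kernel vectors, and combined with $y=y^\ast$ it also yields vanishing of the corresponding rows, so that the off-kernel tail genuinely converges to zero. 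Once this bookkeeping is in place, the dominated convergence argument closes the proof exactly as in \Cref{lem:cyclic_space}.
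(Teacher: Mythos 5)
Your construction is the paper's own: your $z_n$ coincides exactly with the paper's operator $x_n=\rho^{-1/4}p_n y\rho^{-1/4}p_n$ (a two-sided spectral cutoff of the eigenvalues below $1/n$, conjugated by $\rho^{-1/4}$), both proofs verify boundedness and self-adjointness of this cutoff, and both conclude by dominated convergence, using $y=y^\ast$ together with $\ker\rho\subset\ker y$ to kill the matrix entries indexed by kernel eigenvectors. The only difference is bookkeeping: you estimate the error entrywise, via $\norm{\rho^{1/4}z_n\rho^{1/4}-y}_2^2=\sum_{(\alpha,\beta)\notin S_n}\abs{\langle e_\alpha,ye_\beta\rangle}^2$, whereas the paper estimates column norms $\norm{(\rho^{1/4}x_n\rho^{1/4}-y)e_\alpha}$, so your proof is correct and essentially identical to the paper's.
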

\begin{proof}
    The proof is similar to that of \Cref{lem:cyclic_space}. Again, if $x\in \mathbb B(H)$, then $\ker \rho=\ker \rho^{1/4}\subset \ker(\rho^{1/4}x\rho^{1/4})$, and this relation is preserved in the limit, which settles one inclusion.

    For the opposite inclusion let $\rho=\sum_{\alpha\in A}\lambda_\alpha\ket{e_\alpha}\bra{e_\alpha}$ be the spectral decomposition of $\rho$ and let $p_n$ be the orthogonal projection onto the closed linear span of $\{e_\alpha\mid \alpha\in A,\,\lambda_\alpha\in \{0\}\cup[1/n,\infty)\}$. If $y\in S^2(H)$ is self-adjoint, then the operator $x_n=\rho^{-1/4}p_n y \rho^{-1/4}p_n$ is bounded and self-adjoint. Moreover, if $\ker\rho\subset\ker y$, then $\overline{\operatorname{ran} y}\subset (\ker \rho)^\perp$ and hence
    \begin{align*}
        \norm{(\rho^{1/4}x_n \rho^{1/4}-y)e_\alpha}=\begin{cases}
            0&\text{if }\lambda_\alpha\in \{0\}\cup[\frac 1 n,\infty),\\
            \norm{ye_\alpha}&\text{if }\lambda_\alpha\in (0,\frac 1 n).
        \end{cases}
    \end{align*}
    Thus $\rho^{1/4}x_n\rho^{1/4}\to y$ in Hilbert--Schmidt norm by the dominated convergence theorem.
\end{proof}

Let $\rho\in S^1(H)$, $\sigma\in S^1(K)$ be positive operators and $\Phi\colon \mathbb B(H)\to\mathbb B(K)$ is a normal ucp map. By \Cref{lem:ucp_KMS_bounded}, the map
\begin{equation*}
    \rho^{1/4}\mathbb B(H)\rho^{1/4}\to\sigma^{1/4}\mathbb B(K)\sigma^{1/4},\,\rho^{1/4}x\rho^{1/4}\mapsto \sigma^{1/4}\Phi(x)\sigma^{1/4}
\end{equation*}
extends to a contractive linear map 
\begin{equation*}
    \Phi^{(2)}\colon\overline{\rho^{1/4}\mathbb B(H)\rho^{1/4}}^{\norm\cdot_2}\to\overline{\sigma^{1/4}\mathbb B(K)\sigma^{1/4}}^{\norm\cdot_2}.
\end{equation*}

\begin{definition}[Cost]
    Let $\bm x=(x_k)_{k=1}^d$ and $\bm y=(y_k)_{k=1}^d$ be $d$-tuples of self-adjoint operators on $H$ and $K$, respectively, and let $\rho\in S^1(H)$ and $\sigma\in S^1(K)$ be positive operators with finite second moments. If $\Phi\colon \mathbb B(H)\to\mathbb B(K)$ is a normal ucp map, we define its \emph{cost} as
    \begin{equation*}
    C_{\rho,\sigma,\bm x,\bm y}(\Phi)=\sum_{k=1}^d\norm{i_\rho(x_k)}_2^2+\norm{i_\sigma(y_k)}_2^2-2\langle \Phi^{(2)}(i_\rho(x_k)),i_\sigma(y_k)\rangle_2.
    \end{equation*}
\end{definition}
\Cref{lem:KMS_embedding} guarantees that $i_\rho(x_k)$ belongs to the domain of definition of $\Phi^{(2)}$. Clearly, this definition is consistent with the previous definition for bounded cost operators.

The integral representation of the cost now readily generalizes to the case of unbounded cost operators.

\begin{proposition}
Let $H$, $K$ be Hilbert spaces, $\bm x=(x_k)_{k=1}^d$ and $\bm y=(y_k)_{k=1}^d$ $d$-tuples of self-adjoint operators on $H$ and $K$, respectively, and let $\rho\in S^1(H)$, $\sigma\in S^1(K)$ be positive operators with finite second moments. If $\Phi\colon \mathbb B(H)\to \mathbb B(K)$ is a normal ucp map such that $\Phi_\ast(\sigma)=\rho$ and $\bm v=(v_j)$ is a family of Kraus operators for $\Phi$, then
    \begin{equation*}
        C_{\rho,\sigma,\bm x,\bm y}(\Phi)=\sum_{k=1}^d\int_{\mathbb R}\norm{L_{\bm v}(\sigma^{it}i_\sigma(y_k)\sigma^{-it})-R_{\bm v}(\rho^{it}i_\rho(x_k)\rho^{-it})}_2^2\,d\mu(t),
    \end{equation*}
    where $\mu$ is the Borel probability measure on $\mathbb R$ with density $\frac{d\mu}{dt}= 2(\cosh 2\pi t)^{-1}$.
\end{proposition}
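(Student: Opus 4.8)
The plan is to deduce the identity from the bounded case (\Cref{thm:integral_rep_cost}) by approximation. Since both sides are sums over $k$ of terms of the same shape, it suffices to treat the case $d=1$; write $x=x_1$ and $y=y_1$. Under the finite second moment assumption, $i_\rho(x)\in S^2(H)$ and $i_\sigma(y)\in S^2(K)$ are self-adjoint with $\ker\rho\subset\ker i_\rho(x)$ and $\ker\sigma\subset\ker i_\sigma(y)$, so \Cref{lem:KMS_embedding} supplies self-adjoint operators $a_n\in\mathbb B(H)$ and $b_n\in\mathbb B(K)$ with $\rho^{1/4}a_n\rho^{1/4}\to i_\rho(x)$ and $\sigma^{1/4}b_n\sigma^{1/4}\to i_\sigma(y)$ in Hilbert--Schmidt norm. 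Applying \Cref{thm:integral_rep_cost} to the pair $(a_n,b_n)$ and rewriting $\sigma^{1/4+it}b_n\sigma^{1/4-it}=\sigma^{it}(\sigma^{1/4}b_n\sigma^{1/4})\sigma^{-it}$ (and analogously for $\rho$), I obtain
\begin{align*}
    &\norm{\rho^{1/4}a_n\rho^{1/4}}_2^2+\norm{\sigma^{1/4}b_n\sigma^{1/4}}_2^2-2\langle\Phi(a_n),b_n\rangle_\sigma\\
    &\qquad=\int_{\mathbb R}\norm{L_{\bm v}(\sigma^{it}(\sigma^{1/4}b_n\sigma^{1/4})\sigma^{-it})-R_{\bm v}(\rho^{it}(\rho^{1/4}a_n\rho^{1/4})\rho^{-it})}_2^2\,d\mu(t),
\end{align*}
and it remains to pass to the limit $n\to\infty$ on both sides.

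For the left side, the first two terms converge to $\norm{i_\rho(x)}_2^2$ and $\norm{i_\sigma(y)}_2^2$. For the cross term I would use $\langle\Phi(a_n),b_n\rangle_\sigma=\langle\Phi^{(2)}(\rho^{1/4}a_n\rho^{1/4}),\sigma^{1/4}b_n\sigma^{1/4}\rangle_2$ together with the contractivity of $\Phi^{(2)}$ from \Cref{lem:ucp_KMS_bounded} and the joint continuity of the inner product, which yields convergence to $\langle\Phi^{(2)}(i_\rho(x)),i_\sigma(y)\rangle_2$. Hence the left side tends to $C_{\rho,\sigma,\bm x,\bm y}(\Phi)$.

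For the right side the decisive observation is that conjugation by $\sigma^{it}$ preserves the Hilbert--Schmidt norm of any operator whose kernel contains $\ker\sigma$, since $\sigma^{it}$ annihilates $\ker\sigma$ and acts as a unitary on $(\ker\sigma)^\perp$. Therefore
\begin{equation*}
    \norm{\sigma^{it}(\sigma^{1/4}b_n\sigma^{1/4})\sigma^{-it}-\sigma^{it}i_\sigma(y)\sigma^{-it}}_2=\norm{\sigma^{1/4}b_n\sigma^{1/4}-i_\sigma(y)}_2\to 0
\end{equation*}
uniformly in $t$, and likewise for $\rho$. Because $L_{\bm v}$ is an isometry and $R_{\bm v}$ is a contraction—indeed a partial isometry whose initial space contains $\rho^{it}i_\rho(x)\rho^{-it}$ by \Cref{lem:cyclic_space}—the integrand converges uniformly in $t$ to $\norm{L_{\bm v}(\sigma^{it}i_\sigma(y)\sigma^{-it})-R_{\bm v}(\rho^{it}i_\rho(x)\rho^{-it})}_2^2$ and is uniformly bounded in $n$ and $t$. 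Since $\mu$ is a probability measure, uniform convergence permits the interchange of limit and integral, and the claimed representation follows upon summing over $k$.

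The only genuinely delicate point, and the one I expect to require the most care, is this interchange of limit and integral on the right side. It hinges entirely on the uniform-in-$t$ convergence of the integrand, which in turn rests on the $S^2$-isometry of conjugation by $\sigma^{it}$ and $\rho^{it}$ on the relevant subspaces and on the fact that the rotated operators $\rho^{it}i_\rho(x)\rho^{-it}$ stay in the initial space of the partial isometry $R_{\bm v}$, so that no Hilbert--Schmidt norm is lost. Once these facts are recorded, the finiteness of $\mu$ makes the passage to the limit routine.
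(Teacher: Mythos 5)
Your proposal is correct and takes essentially the same approach as the paper: the paper's proof likewise invokes \Cref{lem:KMS_embedding} to produce bounded self-adjoint approximants and then passes to the limit in \Cref{thm:integral_rep_cost}, leaving the convergence details implicit. Your spelled-out justification—contractivity of $\Phi^{(2)}$ for the left side, and the uniform-in-$t$ isometry of conjugation by $\sigma^{it}$, $\rho^{it}$ (valid here since the relevant differences are self-adjoint with kernel containing $\ker\sigma$, resp.\ $\ker\rho$) combined with the finiteness of $\mu$ for the right side—is exactly the argument the paper compresses into ``the result follows in the limit $n\to\infty$.''
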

\begin{proof}
    By \Cref{lem:KMS_embedding}, there exist sequences $(x_k^{(n)})_n$ and $(y_k^{(n)})_n$ of bounded self-adjoint operators on $H$ and $K$, respectively, such that $\rho^{1/4}x_k^{(n)}\rho^{1/4}\to i_\rho(x_k)$ and $\sigma^{1/4}y_k^{(n)}\sigma^{1/4}\to i_\sigma(y_k)$ in Hilbert--Schmidt norm as $n\to\infty$. Then the result follows from \Cref{thm:integral_rep_cost} in the limit $n\to\infty$.
\end{proof}

The same proof as for \Cref{thm:subadditivity_cost} shows the subadditivity of the cost.

\begin{corollary}
    For $k\in \{1,2,3\}$ let $H_k$ be a Hilbert space, $\bm{x}_k$ a $d$-tuple of self-adjoint operators on $H_K$ and $\rho_k\in S^1(H_k)$ a positive operator with finite second moments. If $\Phi_{12}\colon \mathbb B(H_1)\to \mathbb B(H_2)$ and $\Phi_{23}\colon\mathbb B(H_2)\to\mathbb B(H_3)$ are normal ucp maps such that $\Phi_{12\ast}(\rho_2)=\rho_1$ and $\Phi_{23\ast}(\rho_3)=\rho_2$, then
    \begin{equation*}
        C_{\rho_1,\rho_3,\bm x_1,\bm x_3}(\Phi_{23}\circ \Phi_{12})^{1/2}\leq C_{\rho_1,\rho_2,\bm x_1,\bm x_2}(\Phi_{12})^{1/2}+C_{\rho_2,\rho_3,\bm x_2,\bm x_3}(\Phi_{23})^{1/2}.
    \end{equation*}
\end{corollary}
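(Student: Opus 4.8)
The plan is to transcribe the proof of \Cref{thm:subadditivity_cost} essentially word for word, with the bounded integral representation \Cref{thm:integral_rep_cost} replaced by the unbounded version established in the preceding proposition. Because that proposition has already been proved for positive operators with finite second moments, no further approximation is required: I would work directly with the three integral representations. As in the bounded setting, I first reduce to $d=1$ by appealing to the triangle inequality for the Euclidean norm on $\mathbb{R}^d$, and I suppress the subscripts on the cost.

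Next I would fix families of Kraus operators $\bm{v}=(v_i)_{i\in I}$ for $\Phi_{12}$ and $\bm{w}=(w_j)_{j\in J}$ for $\Phi_{23}$, so that $\bm{v}\otimes\bm{w}=(v_iw_j)$ is a family of Kraus operators for $\Phi_{23}\circ\Phi_{12}$. The factorizations $L_{\bm{v}\otimes\bm{w}}=(\bigoplus_j L_{\bm{v}})\circ L_{\bm{w}}$ and $R_{\bm{v}\otimes\bm{w}}=(\bigoplus_i R_{\bm{w}})\circ R_{\bm{v}}$, as well as the commutation identity $(\bigoplus_j L_{\bm{v}})\circ R_{\bm{w}}=(\bigoplus_i R_{\bm{w}})\circ L_{\bm{v}}$, involve only the algebraic definitions of $L$ and $R$ and therefore persist unchanged. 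The unbounded integral representation then gives
\[
C(\Phi_{23}\circ\Phi_{12})=\int_{\mathbb{R}}\norm{L_{\bm{v}\otimes\bm{w}}\bigl(\rho_3^{it}i_{\rho_3}(x_3)\rho_3^{-it}\bigr)-R_{\bm{v}\otimes\bm{w}}\bigl(\rho_1^{it}i_{\rho_1}(x_1)\rho_1^{-it}\bigr)}_2^2\,d\mu(t).
\]
Inserting the middle term $(\bigoplus_j L_{\bm{v}})R_{\bm{w}}(\rho_2^{it}i_{\rho_2}(x_2)\rho_2^{-it})=(\bigoplus_i R_{\bm{w}})L_{\bm{v}}(\rho_2^{it}i_{\rho_2}(x_2)\rho_2^{-it})$ and splitting the integrand with the Hilbert--Schmidt triangle inequality, I would use that $L_{\bm{v}}$ is an isometry and $R_{\bm{w}}$ a contraction to collapse the two resulting pieces, pointwise in $t$, to the integrands of $C(\Phi_{23})$ and $C(\Phi_{12})$. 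The triangle inequality in $L^2(\mathbb{R},\mu)$ then yields the asserted subadditivity.

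The only point genuinely needing attention---and the one place where the finite-moment hypothesis enters---is the well-definedness of the objects in this argument. Finiteness of the second moments of each $\rho_k$ guarantees $i_{\rho_k}(x_k)=\overline{\rho_k^{1/4}x_k\rho_k^{1/4}}\in S^2(H_k)$, so that all three integral representations apply. For the middle term one additionally needs $\rho_2^{it}i_{\rho_2}(x_2)\rho_2^{-it}$ to lie in $S^2(H_2)$ and in the initial space of $R_{\bm{w}}$, so that the commutation identity above is valid on it. This follows because \Cref{lem:KMS_embedding} gives $\ker\rho_2\subset\ker i_{\rho_2}(x_2)$, and conjugation by the partial isometry $\rho_2^{it}$ preserves both Hilbert--Schmidt membership and this kernel inclusion; hence the middle term sits in $\overline{\mathbb{B}(H_2)\rho_2^{1/2}}^{\norm{\cdot}_2}$ as characterized in \Cref{lem:cyclic_space}. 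With these verifications in place, the bounded argument transfers verbatim.
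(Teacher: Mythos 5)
Your proposal is correct and takes essentially the same route as the paper, whose proof of this corollary consists precisely of the remark that the argument of \Cref{thm:subadditivity_cost} carries over verbatim once the bounded integral representation is replaced by the unbounded one from the preceding proposition. Your extra verification that $\rho_2^{it}i_{\rho_2}(x_2)\rho_2^{-it}$ is Hilbert--Schmidt and lies in the initial space of $R_{\bm w}$ is sound (and a reasonable precaution), though not strictly needed, since the commutation identity $(\bigoplus_j L_{\bm v})\circ R_{\bm w}=(\bigoplus_i R_{\bm w})\circ L_{\bm v}$ in fact holds on all of $S^2(H_2)$, both sides vanishing on the orthogonal complement of the initial space of $R_{\bm w}$.
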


\begin{definition}[Wasserstein divergence]
    Let $H$ be a Hilbert space and $\bm x$ a $d$-tuple of self-adjoint operators on $H$. If $\rho,\sigma\in S^1(H)$ are positive trace-class operators with trace $1$, their \emph{Wasserstein divergence} is defined as
\begin{equation*}
    W_{2,\mathrm{DPT}}(\rho,\sigma)=\inf\{C_{\rho,\sigma,\bm x,\bm y}(\Phi)^{1/2}\mid \Phi\colon \mathbb B(H)\to\mathbb B(H)\text{ normal ucp, }\Phi_\ast(\sigma)=\rho\}.
\end{equation*}
\end{definition}

\begin{corollary}
    The Wasserstein divergence satisfies the triangle inequality.
\end{corollary}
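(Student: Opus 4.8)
The plan is to deduce the triangle inequality directly from the subadditivity of the cost established in the preceding corollary, exactly as \Cref{cor:triangle_bounded} follows from \Cref{thm:subadditivity_cost} in the bounded setting. Fix positive trace-class operators $\rho_1,\rho_2,\rho_3\in S^1(H)$ of trace $1$ together with the common tuple $\bm x$ of self-adjoint cost operators. First I would dispose of the degenerate case: if any of the three states fails to have finite second moments, then at least one term on the right-hand side of the asserted inequality is infinite (no normal ucp map yields a finite cost once the relevant $i_\rho(x_k)$ is undefined), so the inequality holds trivially. Hence I may assume that $\rho_1$, $\rho_2$ and $\rho_3$ all have finite second moments, which is precisely the hypothesis under which the preceding corollary applies.

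In the main case, given $\varepsilon>0$, I would select normal ucp maps $\Phi_{12},\Phi_{23}\colon\mathbb B(H)\to\mathbb B(H)$ with $\Phi_{12\ast}(\rho_2)=\rho_1$ and $\Phi_{23\ast}(\rho_3)=\rho_2$ that are $\varepsilon$-optimal, i.e.\ $C_{\rho_1,\rho_2,\bm x,\bm x}(\Phi_{12})^{1/2}\leq W_{2,\mathrm{DPT}}(\rho_1,\rho_2)+\varepsilon$ and $C_{\rho_2,\rho_3,\bm x,\bm x}(\Phi_{23})^{1/2}\leq W_{2,\mathrm{DPT}}(\rho_2,\rho_3)+\varepsilon$. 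The composition $\Phi_{23}\circ\Phi_{12}$ is again a normal ucp map, and its predual satisfies $(\Phi_{23}\circ\Phi_{12})_\ast(\rho_3)=\Phi_{12\ast}(\Phi_{23\ast}(\rho_3))=\Phi_{12\ast}(\rho_2)=\rho_1$, so it is an admissible competitor in the infimum defining $W_{2,\mathrm{DPT}}(\rho_1,\rho_3)$.

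Applying the preceding corollary with $H_1=H_2=H_3=H$ and $\bm x_1=\bm x_2=\bm x_3=\bm x$ then yields
\begin{equation*}
    W_{2,\mathrm{DPT}}(\rho_1,\rho_3)\leq C_{\rho_1,\rho_3,\bm x,\bm x}(\Phi_{23}\circ\Phi_{12})^{1/2}\leq C_{\rho_1,\rho_2,\bm x,\bm x}(\Phi_{12})^{1/2}+C_{\rho_2,\rho_3,\bm x,\bm x}(\Phi_{23})^{1/2},
\end{equation*}
and letting $\varepsilon\to 0$ gives $W_{2,\mathrm{DPT}}(\rho_1,\rho_3)\leq W_{2,\mathrm{DPT}}(\rho_1,\rho_2)+W_{2,\mathrm{DPT}}(\rho_2,\rho_3)$. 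Because the two approximating channels are chosen independently, no joint-selection difficulty arises in passing to the infimum. I do not expect any genuine analytic obstacle at this stage: the entire weight of the argument sits in the subadditivity estimate, which itself rests on the integral representation of \Cref{thm:integral_rep_cost}, so the only points demanding attention are bookkeeping ones — that composition preserves normality, complete positivity and the marginal constraint, and that the infinite-moment cases are correctly absorbed into the trivial case above.
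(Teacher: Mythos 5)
Your proof is correct and follows exactly the route the paper intends: the corollary is an immediate consequence of the subadditivity of the cost under composition of channels, applied to $\varepsilon$-optimal competitors, and the paper accordingly states it without further proof. Your extra bookkeeping (the marginal check $(\Phi_{23}\circ\Phi_{12})_\ast(\rho_3)=\rho_1$ and the finite-second-moment caveat) is sound and, if anything, slightly more careful than the paper's implicit treatment.
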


\bibliographystyle{alpha}
\bibliography{ref}

\end{document}